\documentclass[12pt,oneside,english]{amsart}
\usepackage[T1]{fontenc}
\usepackage[utf8]{inputenc}
\usepackage[letterpaper]{geometry}
\geometry{verbose,tmargin=1in,bmargin=1in,lmargin=1in,rmargin=1in}
\usepackage{float}
\usepackage{booktabs}
\usepackage{amstext}
\usepackage{amsthm}
\usepackage{amssymb}
\usepackage{graphicx}
\usepackage{setspace}
\usepackage[authoryear]{natbib}
\doublespacing

\makeatletter


\floatstyle{ruled}
\newfloat{algorithm}{tbp}{loa}
\providecommand{\algorithmname}{Algorithm}
\floatname{algorithm}{\protect\algorithmname}

\theoremstyle{plain}

\include{def}
\def\dispmuskip{\thinmuskip= 3mu plus 0mu minus 2mu \medmuskip=  4mu plus 2mu minus 2mu \thickmuskip=5mu plus 5mu minus 2mu}
\def\textmuskip{\thinmuskip= 0mu                    \medmuskip=  1mu plus 1mu minus 1mu \thickmuskip=2mu plus 3mu minus 1mu}
\def\beq{\dispmuskip\begin{equation}}    \def\eeq{\end{equation}\textmuskip}
\def\beqn{\dispmuskip\begin{displaymath}}\def\eeqn{\end{displaymath}\textmuskip}
\def\bea{\dispmuskip\begin{eqnarray}}    \def\eea{\end{eqnarray}\textmuskip}
\def\bean{\dispmuskip\begin{eqnarray*}}  \def\eean{\end{eqnarray*}\textmuskip}

\setcounter{MaxMatrixCols}{10}
\usepackage{chngcntr}
\newtheorem{theorem}{Theorem}
\theoremstyle{plain}

\newtheorem{proposition}{Proposition}

\newtheorem{corollary}{Corollary}
\newtheorem{remark}{Remark}
\newtheorem{note}{Note}

\theoremstyle{plain}


\usepackage[within=none]{caption}

\usepackage{subfigure}
\usepackage{babel}
\usepackage{amsmath}
\usepackage{mathtools}
\usepackage{bbm}
\usepackage{relsize}

\usepackage{algpseudocode}

\usepackage [displaymath,mathlines]{lineno}



\def\wh{\widehat}

\def\transp{\tiny T}
\def\bm{\boldsymbol}

\newcommand{\cdummy}{\cdot}
\newcommand{\tmmathbf}[1]{\ensuremath{\boldsymbol{#1}}}

\allowdisplaybreaks

\providecommand{\xequal}[2][]{\mathop{=}\limits_{#1}^{#2}}
\makeatother

\usepackage{babel}
\providecommand{\theoremname}{Theorem}

\newcommand{\mathd}{\mathrm{d}}

\def\transp{{\it{\tiny T}}}
\def\wh{\widehat}
\newcommand{\nobracket}{}

\begin{document}
\title{On approximating copulas by finite mixtures}

\author{Mohamad A. Khaled and Robert Kohn}
\thanks{Khaled: \textit{Department of Economics, University of Queensland, Brisbane Qld 4072, Australia email m.khaled@uq.edu.au}. Kohn: \textit{UNSW School of Business, University of New South Wales, Sydney NSW 2052, Australia emailr.kohn@unsw.edu.au}}
\maketitle

\begin{abstract}
Copulas are now frequently used to construct or estimate multivariate distributions because of their ability to take into account the multivariate dependence of the different variables while separately specifying marginal distributions. Copula based multivariate models can often also be more parsimonious than fitting a flexible multivariate model, such as a mixture of normals model, directly to the data.

However, to be effective, it is imperative that the family of copula models considered is sufficiently flexible. Although finite mixtures of copulas have been used to construct flexible families of copulas, their approximation properties are not well understood and we show that natural candidates such as mixtures of elliptical copulas and mixtures of Archimedean copulas  cannot  approximate a general  copula arbitrarily well.
Our article develops fundamental tools for approximating a general copula arbitrarily well by a copulas based on finite mixtures. We show the asymptotic properties as well as illustrate the advantages of our methodology empirically on a financial data set and on some artificial data.

Keywords: Archimedean copula; Elliptical copula;  Finite mixtures; Mixtures of copulas; Nonparametric estimation.
\end{abstract}

\section{Introduction\label{S: introd}}

The purpose of this paper is to establish a new methodology for the non-parametric estimation of copulas through the use of sieves based on finite mixture models. As good approximation properties of finite mixtures are an indispensable prerequisite for their use in non-parametric estimation, the paper mainly focuses on establishing a mathematical theory for proving approximation properties of those mixtures. Those properties are not only of mathematical interest in their own right, but they also bring the whole machinery of mixture modeling to bear down on the problem of estimating copulas non-parametrically or to a lesser extent modeling them in an optimal way by combining flexibility with parsimony.

Our article provides some foundational tools for using finite mixture models to non-parametrically estimate a target copula function $C$, having density $c$, by asking under what conditions can we find a positive integer $R$, cumulative distribution functions $G_1, \dots, G_R$, and positive probabilities $\pi_1, \dots, \pi_R$ satisfying $\pi_1 + \cdots + \pi_R=1$ to approximate $C$ by $G:=\pi_1 G_1 + \cdots + \pi_R G_R$ for a given precision. Or, alternatively, under what conditions can we find  probability density functions $g_1, \dots, g_R$, to approximate the density $c$ by  $g:=\pi_1 g_1 + \cdots + \pi_R g_R$ for a given precision. Our main result uses this framework to propose a copula family based on a finite mixture that can approximate any copula arbitrarily well. The summary at the end of the paper further discusses the statistical application of these results.

Our article also shows that  neither mixtures of Archimedean copulas nor  mixtures of elliptical copulas can in general approximate a copula density arbitrarily well. Even though they are natural candidate families that have been extensively used to approximate an arbitrary copula, the literature contains no mathematical results on how good their approximation properties are.

The methodology introduced in this paper is not only of interest when it comes to approximating copulas, but it is also useful for approximating arbitrary distributions. Indeed, using a copula-based approach for approximating multivariate distributions involves approximating each of the marginals separately while also approximating the underlying implied joint distribution. Such an approach for approximating multivariate distributions is attractive for two reasons. First, we can directly control the properties of the approximating marginal distributions rather than just deducing their properties from the approximation of the joint. For example, consider approximating a high dimensional multivariate model by a flexible factor based model such as a mixture of factor analyzers; see, for example, Chapter 8 of \cite{McLachlan:Peel:2000}.
It is then difficult to ensure that the implied marginal distributions  will be consistent with an approach that approximates the marginal distributions directly. A second attractive property of copulas is that a copula based multivariate approximation can often be much more parsimonious than approximating the multivariate distribution directly. For example, consider a bivariate distribution with independent marginals each of which is a 6 component mixture of normals. Then approximating this distribution by a bivariate mixture of normals will require a 36 component mixture, while a copula based approach will fit a 6 component mixture to each of the marginals and then a standard normal for the underlying Gaussian copula. Section~\ref{SS: simple illustration} illustrates the same issue on a more complex example. However, it is imperative when using a  copula based approach to approximate multivariate distributions that the family of approximating copulas is sufficiently flexible. The reason is that  the  copula is formed by transforming each of the marginals to a uniform distribution, which can potentially make the underlying distribution of the copula quite complex. \cite{tran2014copula} show empirically  that this can happen, for example, when the original multivariate distribution is heavy tailed or multimodal.

We now briefly review the literature on  nonparametric estimation of copulas. The foundation of non-parametric estimation is based on estimating the copula cumulative distribution function (CDF) using empirical copulas and studying the asymptotic weak convergence properties of the empirical copula process. See, for example, \cite{fermanian2004weak} and \cite{segers2012asymptotics}. Among density estimators, Bernstein copulas constitute a prominent example; see {\cite{sancetta2004bernstein}} and \cite{sancetta2007nonparametric},  or \cite{burda2014copula} for a Bayesian approach. Currently, Bernstein copulas do not scale well with the dimension of the multivariate distribution and applications have been restricted to small dimensions as the number of parameters increases exponentially with dimension. Some other approaches are based on kernels \citep{omelka2009improved} and wavelets \citep{genest2009estimating}. There are very few papers that explicitly address the question of estimating copulas non-parametrically through the use of mixtures. \cite{wu2014bayesian} and \cite{wu2015bayesian} present a Bayesian non-parametric approach. \cite{wu2014bayesian} take Gaussian copulas as the mixture components. \cite{wu2015bayesian} take  multivariate skew normal copulas as the mixture components. However, neither paper presents approximation results.

Our approach has several advantages compared to competing approaches in the literature on dense subfamilies for approximating copulas; 
see, for example, Chapter 4 in \cite{durante2015principles} for a excellent review. First, because our approach is mixture based,
we can use the vast statistical and computational literature on estimating mixtures. Furthermore, compared to some more tractable approximating dense families like Bernstein copulas, our approach is more parsimonious in certain cases because the number of ``effective parameters'' that require 
estimation grows quadratically in the number of dimensions instead of exponentially. Finally, our approach is attractive because it
automatically yields an easily computable valid copula for the approximation.

The rest of the paper is organized as follows. Section~\ref{S: approximation properties} presents our fundamental approximation results and constructs a family of mixture models that can approximate any copula arbitrarily well.
 Section~\ref{SS: simple illustration} uses a simple example to illustrate our approximation approach. Section~\ref{S: asymptotic properties} introduces a concentration inequality useful for illustrating the asymptotic properties of the mixture family.
Sections~\ref{S: mixtures of archimedean copulas} and \ref{S: mixtures of elliptical copulas} respectively
characterize Archimedean and elliptical copulas and discuss their approximation properties. Section~\ref{S: empirical illustrations} applies our  approximation approach to a financial data set that was previously analyzed in the literature. We show that our approach provides a better fit and is more parsimonious than that obtained by a mixture of Gaussian copulas. There are two technical appendices. Appendix~\ref{app: proofs} contains all the proofs. Appendix~\ref{app: drawing from example copula} shows how to sample from the specific copula we use for the illustration in Section~\ref{SS: simple illustration}. Section~\ref{S: conclusion} concludes by discussing future theoretical and computational work based on our results.

\section{Approximation properties of some mixtures of general distributions on
the unit hypercube\label{S: approximation properties}}

We first consider approximating some distributions on the unit interval $(0,1)$, and then consider the case of classes of copulas on $(0, 1)^M$. The reason for starting with the unit interval is that we can introduce our methodology and some main ideas in a simpler setting before focusing on our main objective, which is the approximation of copulas and copula densities.

There is an extensive literature on approximating arbitrary distributions by finite mixtures. See {\cite{zeevi1997density}}, {\cite{dalal1983approximating}} or {\cite{lijoi2003approximating}}. We will use the elements of the theory of approximation by universal series (see
{\cite{bacharoglou2010approximation}} and {\cite{koumandos2010universal}}) to obtain our approximations.

We begin by stating an adaptation of a theorem from {\cite{bacharoglou2010approximation}} that is essential for deriving our main results. The theorem is based on the theory of Universal series in $\bigcap_{p > 1} \ell^p$ and yields approximations in $\| \cdot \|$ over certain subsets of $\mathbb{R}^M$ for bounded continuous functions or functions with bounded support. Here $\ell^p$ stands for the traditional $p$-power summable sequence spaces and $\| \cdot \|$ means either the $L_1$ or the $L_\infty$ norm, or their sum. That norm is used in the statement of Theorem~\ref{bacharoglu_theorem}. Otherwise, if the result concerns a specific norm, then we will denote it by either$\| \cdot \|_1$ or $\| \cdot \|_\infty$ if necessary.
Also, let $\mathbb{N}$ be the set of positive integers and let $\mathbb{Q}$ be the set of rational numbers.

\begin{theorem}
\label{bacharoglu_theorem}
  Let
  \[ \mathcal{A}^+ = \left\{ \alpha = (\alpha_n)_{n \in \mathbb{N}}, \alpha_n
     > 0, \,\,\, \text{for every}~ n, \alpha \in \bigcap_{p > 1} \ell^p \right\},
  \]
  and let the sequence $(\phi_n)_n$ be formed by enumerating
  $\phi_{\frac{1}{k}} (\bm x - \bm \mu)$, where $k \in \mathbb{N}$, $\bm \mu \in
  \mathbb{Q}^M$ (some enumeration of it) and $\phi_s (\bm x - \bm \mu)$ is the
  density of a multivariate normal distribution with mean vector $\bm \mu$ and
  covariance matrix $s\bm I_M$. Let $f$ be some density that has either compact
  support on $\mathbb{R}^M$ or that is bounded and continuous.

  Then, there exists an $\alpha \in \mathcal{A^+}$ and a sequence $(R_n)_n$ of
  integers such that for all $\varepsilon > 0$, there exists an $n (\varepsilon)$
  such that for all integers $n>n(\varepsilon)$, the following holds
  \[ \left\| f - \frac{1}{\sum_{j = 1}^{R_n} \alpha_j}  \sum_{j = 1}^{R_n}
     \alpha_j \phi_j \right\| < \varepsilon .\]
\end{theorem}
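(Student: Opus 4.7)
The plan is to mirror the universal series argument of \cite{bacharoglou2010approximation}, checking that its two ingredients---an underlying density/approximation lemma and an abstract genericity statement in $\bigcap_{p>1}\ell^p$---both go through in the multivariate setting.

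First, I would show that the set of finite convex combinations $\sum_j c_j \phi_j$, with $c_j\geq 0$ and $\sum_j c_j=1$, is dense in the admissible class of $f$ with respect to $\|\cdot\|$. The argument is an approximation-to-the-identity followed by quadrature: for large $k$, the convolution $f * \phi_{1/k}$ converges to $f$ in both $L_1$ and $L_\infty$ (using compact support or bounded continuity plus uniform continuity on compacta, together with a tail estimate for the Gaussian); and for each fixed $k$, the integral $\int f(\bm y)\phi_{1/k}(\bm x-\bm y)\,d\bm y$ is uniformly approximated by a Riemann sum over a fine rational grid, which after normalisation is a convex combination of functions of the form $\phi_{1/k}(\bm x-\bm\mu)$ with $\bm\mu\in\mathbb{Q}^M$. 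Since every such function appears in the enumeration $(\phi_n)_n$, this produces a countable family of finite convex combinations of the $\phi_j$'s that is dense in the target class.

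Given this density result, I would invoke the universal series machinery of \cite{bacharoglou2010approximation} as a black box. The space $\mathcal{A}^+$, endowed with a natural Polish topology inherited from $\bigcap_{p>1}\ell^p$, is Baire; inside it, the set of sequences $\alpha$ for which the normalised partial sums $\tfrac{1}{\sum_{j=1}^{R}\alpha_j}\sum_{j=1}^R \alpha_j\phi_j$ come within $1/m$ of every member of a fixed countable dense family $(f_m)$ (for a suitable truncation index $R=R_n$) is a countable intersection of dense open sets, and hence non-empty. Any such $\alpha$ together with the accompanying $(R_n)$ yields the required approximation property for every admissible $f$, by triangle inequality against the dense family.

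The genuinely novel content relative to the one-dimensional result in \cite{bacharoglou2010approximation} lives entirely in the first step: verifying that smoothing by $\phi_{1/k}$ followed by a rational-grid quadrature is dense, uniformly in $\bm x$ for the $L_\infty$ component, and that the mixing weights can be chosen nonnegative and summing to one. The $L_\infty$ bound is the main technical point, since it requires uniform control of the quadrature error, which in turn rests on the equicontinuity of $\bm y\mapsto f(\bm y)\phi_{1/k}(\bm x-\bm y)$ uniformly in $\bm x$ (delivered by compact support or bounded continuity). Once this is in hand, the universal series portion is a purely abstract Baire-category construction that transfers from $M=1$ to general $M$ verbatim, as it depends only on the separability of the approximating family and not on its internal structure.
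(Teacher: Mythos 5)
Your proposal takes essentially the same route as the paper: the paper's own proof consists of invoking \cite{bacharoglou2010approximation} (Theorem 2.4 for the compact-support case, Corollary 2.5(2)--(3) for the $L_1$ and $L_\infty$ cases), which is precisely the density-lemma-plus-universal-series argument you reconstruct, with the abstract Baire-category step used as a black box. If anything your version is more explicit than the paper's, since you isolate the one step where the dimension $M$ actually enters --- the Gaussian-smoothing and rational-grid quadrature density lemma in $\mathbb{R}^M$ --- though note one small repair: compact support alone does not deliver the equicontinuity you invoke for the quadrature error (you need continuity of $f$ there, or a preliminary $L_1$-approximation of $f$ by a continuous compactly supported density).
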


\subsection{Approximation on the unit interval\label{SS: approx on an interval}}

Let $F$ be the cumulative distribution function of an absolutely continuous random variable $X$. If one applies the transformation $F$ to $X$, then it immediately follows by a simple calculation that the transformed random variable $F (X)$ is uniform on $(0, 1)$,
\begin{align*}
  \Pr [F (X) \leqslant t] & = \Pr [X \leqslant F^{- 1} (t)]\\
  & =  F \circ F^{- 1} (t)  =  t.
\end{align*}
If one applies a different transformation (say using a different distribution function $H$), then we obtain that the distribution of the transformed random variable $H(X)$ is
\begin{eqnarray*}
  \Pr [H (X) \leqslant t] & = & F \circ H^{- 1} (t).
\end{eqnarray*}
If $H$ is the distribution of an absolutely continuous random variable, then $H(X)$ is an absolutely continuous random variable on $(0, 1)$, but is not uniform in general. Theorem~\ref{thm: one dim approx} exploits such simple constructions by using them as building blocks for finite mixture distributions. In particular, the theorem exploits distributions of the form $F \circ H^{- 1} (t)$ in order to approximate densities (or distributions) on $(0,1)$.

\begin{theorem}\label{thm: one dim approx}
  Let $g$ be an unknown continuous density function with compact support in
  $(0, 1)$. Let $h$ be some arbitrary bounded and absolutely continuous density
  function with its support being the whole real line and let $H$ be its
  corresponding CDF. Let $\phi_{\mu, \sigma}$ be a normal density with mean $\mu$ and standard deviation $\sigma$.
   Then, for every $\varepsilon > 0$, there exist an $R \in \mathbb{N}$, $(\pi_1,
  \ldots, \pi_R) \in \Delta_R$ (the $R$-simplex), $\mu_1, \ldots, \mu_R \in
  \mathbb{R}$ and $\sigma_1, \ldots, \sigma_R \in (0, \infty)$ such that
  \[ \left\| g - \sum_{r = 1}^R \pi_r \frac{\phi_{\mu_r, \sigma_r} \circ H^{-
     1}}{h \circ H^{-1} }\right\| < \varepsilon,   \]
where $\| \cdot \|$ is either the $L_1$ or $L_\infty$ norm.
\end{theorem}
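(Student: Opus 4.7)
The plan is to reduce the claim to Theorem~\ref{bacharoglu_theorem} via the change of variables $y = H^{-1}(x)$. First, I would introduce the density on $\mathbb{R}$ defined by
\[
\bar g(y) := g(H(y))\, h(y),
\]
which is the density of $Y = H^{-1}(X)$ when $X$ has density $g$. Because $g$ is continuous with compact support, say $[a,b] \subset (0,1)$, the function $\bar g$ vanishes outside the compact interval $[H^{-1}(a), H^{-1}(b)]$ and is bounded on $\mathbb{R}$, since both $g$ and $h$ are bounded. Thus $\bar g$ falls within the scope of Theorem~\ref{bacharoglu_theorem} in its one-dimensional, compactly supported form.

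Next, applying Theorem~\ref{bacharoglu_theorem} to $\bar g$ yields, for any $\delta > 0$, a positive integer $R$, weights $\pi_1, \ldots, \pi_R > 0$ summing to one, means $\mu_1, \ldots, \mu_R$, and scales $\sigma_1, \ldots, \sigma_R > 0$ such that
\[
\Bigl\| \bar g - \sum_{r=1}^R \pi_r\, \phi_{\mu_r, \sigma_r} \Bigr\| < \delta.
\]
I would then take as the candidate approximant
\[
\tilde g(x) := \sum_{r=1}^R \pi_r\, \frac{\phi_{\mu_r, \sigma_r}(H^{-1}(x))}{h(H^{-1}(x))},
\]
which has exactly the form in the statement and, by the inverse change of variables, is itself a probability density on $(0,1)$. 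For the $L_1$ norm, the substitution $x = H(y)$ with $dx = h(y)\, dy$ then gives
\[
\int_0^1 |g(x) - \tilde g(x)|\, dx \;=\; \int_{-\infty}^{\infty} \Bigl| \bar g(y) - \sum_{r=1}^R \pi_r\, \phi_{\mu_r, \sigma_r}(y) \Bigr|\, dy \;<\; \delta,
\]
so the bound transfers verbatim upon choosing $\delta = \varepsilon$.

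The main obstacle is the $L_\infty$ case, where the division by $h(H^{-1}(x))$ amplifies errors in the tails, since $h(H^{-1}(x)) \to 0$ as $x \to 0^+$ or $x \to 1^-$. I would split the supremum into the compact set $[a,b]$ and its complement. On $[a,b]$, the denominator $h(H^{-1}(x))$ is bounded below by a strictly positive constant, so a uniform bound on $\bar g - \sum_r \pi_r \phi_{\mu_r, \sigma_r}$ pulls back to a uniform bound on $g - \tilde g$ up to a multiplicative factor. On the complement, $g \equiv 0$ and I need to verify that the approximating mixture itself is negligible; this reduces to arranging the approximating normals with means near the support of $\bar g$ and with sufficiently small bandwidths so that the mixture decays at least as rapidly as $h$ in the tails, something I expect to follow from a careful inspection of the construction underlying Theorem~\ref{bacharoglu_theorem}, possibly after a preliminary mollification of $\bar g$.
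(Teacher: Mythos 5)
Your $L_1$ argument is essentially identical to the paper's proof: push $g$ forward through $H^{-1}$ to obtain the density $\bar g = h\cdot(g\circ H)$ on $\mathbb{R}$ (continuous, bounded, compactly supported), apply Theorem~\ref{bacharoglu_theorem}, and change variables back so that the $L_1$ bound transfers verbatim. (Incidentally, your formula $\bar g(y)=g(H(y))h(y)$ is the correct one; the paper's proof writes $f = h\cdot(g\circ H^{-1})$, which is a typo for the same object.) So that half of the proposal is correct and matches the paper.

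The genuine gap is the $L_\infty$ case, which you explicitly leave unresolved, and the repair you sketch cannot work as stated. Every finite Gaussian mixture has tails of exactly Gaussian order, so you cannot arrange for $\sum_r \pi_r\phi_{\mu_r,\sigma_r}$ to ``decay at least as rapidly as $h$'': if $h$ has super-Gaussian tails (say $h(y)\propto e^{-y^4}$), then for \emph{every} admissible choice of parameters the ratio $\phi_{\mu_r,\sigma_r}(y)/h(y)\to\infty$ as $|y|\to\infty$, so the approximant $\tilde g$ is unbounded near the endpoints of $(0,1)$ and no tail-matching argument can control the supremum off $[a,b]$. The paper avoids tail analysis entirely and handles $L_\infty$ by a short contradiction argument: if the $L_\infty$ approximation failed for every mixture, then (invoking the continuity of $g$) the $L_1$ approximation would also fail, contradicting the first part. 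Whether that one-line deduction is itself airtight is debatable --- a small $L_1$ error does not in general force a small $L_\infty$ error, since the mixtures can have arbitrarily thin spikes --- but it is the route the paper takes, and it is structurally different from the tail-control strategy you propose. To complete your write-up in the paper's spirit, you would replace your final paragraph with that reduction of the $L_\infty$ case to the $L_1$ case, rather than attempting pointwise control of the mixture in the tails.
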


Even though the theorem is stated for the approximation of a density $g$, it can be used with no additional difficulty to approximate an arbitrary continuous cumulative distribution, say $G$. This is a particularly important point, as the results of this paper can be used to approximate either the distribution of a random variable, or, with some extra assumptions, the density of that random variable. The next section extends theorem \ref{thm: one dim approx} to the multivariate setting and shows how to approximate a copula and its density.

\subsection{Approximation of copulas\label{SS: approx of copulas}}

\begin{sloppypar}
Theorem~\ref{corr: copula approx by mixture} applies the same reasoning as the approximation result on unit
intervals to get an approximation result for copulas in terms of distributions on $(0,1)^M$. Let
$H$ be the CDF of an $M \times 1 $ random vector, with marginal distribution functions $H_1, \dots, H_M$ and define
$\mathfrak{F}_H: \mathbb{R}^M \rightarrow (0,1)^M $ as $\mathfrak{F}_H(\bm x):= \left (H_1(x_1), \dots, H_M(x_M) \right )$
We similarly define $\mathfrak{F}_H^{-1}: (0,1)^M \rightarrow \mathbb{R}^M$ as $\mathfrak{F}^{-1}_H(\bm u): = \left (H_1^{-1}(u_1), \dots, H_M^{-1}(u_M) \right )$
where the marginal inverses are appropriately defined.
\end{sloppypar}

\begin{theorem} \label{corr: copula approx by mixture}
  Let $C$ be some arbitrary $M$-dimensional absolutely continuous copula with
  density $c$. Let $H$ be the CDF of an $M \times 1 $ random vector, with marginal CDF's $H_1, \dots, H_M$ that are absolutely continuous with non-compact support on $\mathbb{R} $. Let $h, h_1, \dots, h_M$ be the corresponding densities,
  which we assume are bounded.
 Then, for any $\varepsilon > 0 $, there
 exist $R > 0$, $\pi_1, \ldots, \pi_R
  \in \Delta_R$, $\sigma_1, \ldots, \sigma_r \in (0, \infty)$ and $\bm \mu_1,
  \ldots, \bm \mu_r \in \mathbb{R}^M$ such that
  \[ \left\| c -  q_R(\bm u )\right\|_1 < \varepsilon , \]
     where
\[q_R(\bm u ) :=\sum_{r = 1}^R \pi_r
    \frac{\phi_{\bm \mu_r, \sigma_r I_M} \circ \mathfrak{F}_H^{- 1}}{ \prod_{i=1}^M h_i \circ H_i^{- 1} } (\bm u), \]
     $\phi_{\bm \mu, \sigma \bm I_M  }$ is the density of an $M \times 1 $ normal
     vector with mean $\bm \mu$   and covariance matrix $\sigma \bm I_M$, and $\| \cdot \|_1$ is the $L_1$ norm.
     If $c$ is also continuous, then the result also holds for the $L_\infty$ norm.
\end{theorem}

Notice that $q_R$  is a mixture of distributions on the unit cube whose marginals are
\[ q_{R,i}:=\sum_{r = 1}^R \pi_r \frac{\phi_{\mu_{i, r}, \sigma_r } \circ H^{- 1}_i}{ h_i\circ H_i^{-1} }
    , \quad i=1, \dots, M.  \]
		
Although $q_R$ is not a copula density, it can be shown that it is the density of a random variable whose marginals are arbitrarily close to being uniformly distributed and that, furthermore, the copula of density $q_R$ is arbitrarily close itself to $c$.

\begin{note}

Given the arbitrariness of $H$, the corollary applies even in the simple setting where $H$ is just the CDF of an i.i.d. vector with standard normal marginals.

\end{note}

\begin{note}

Theorem~\ref{corr: copula approx by mixture} provides an approximation result for the density using the $L_1$ norm and can be strengthened to the $L_\infty$ norm under stricter conditions. One must, however, observe that it directly applies to the copula distribution $C$ with no extra assumptions.

That is,  suppose that $C$ is some arbitrary $M$-dimensional absolutely continuous copula distribution function and let $H$ be as in Theorem~\ref{corr: copula approx by mixture}.
 Then, for any $\varepsilon > 0 $, there
 exist $R > 0$, $\pi_1, \ldots, \pi_R
  \in \Delta_R$, $\sigma_1, \ldots, \sigma_r \in (0, \infty)$ and $\bm \mu_1,
  \ldots, \bm \mu_r \in \mathbb{R}^M$ such that
  \[ \left\| C -  q_R(\bm u )\right\| < \varepsilon , \]
     where
\[q_R(\bm u ) :=\sum_{r = 1}^R \pi_r
    \frac{\phi_{\bm \mu_r, \sigma_r I_M} \circ \mathfrak{F}_H^{- 1}}{ \prod_{i=1}^M h_i \circ H_i^{- 1} } (\bm u), \]
     $\phi_{\bm \mu, \sigma \bm I_M  }$ is the density of an $M \times 1 $ normal
     vector with mean $\bm \mu$   and covariance matrix $\sigma \bm I_M$, and $\| \cdot \|$ is either the $L_1$ norm or $L_\infty$ norm.

		Obviously, $R$ and the different probability, location and scale parameters in the approximating mixture are not, in general, the same as in Theorem~\ref{corr: copula approx by mixture}.
		
\end{note}

The next corollary shows that the marginals of $q_R$
can be made arbitrarily close to uniform. Let $Q_R$ be the CDF of $q_R$ with marginals $Q_{R,i}, i=1, \dots, M$.
\begin{corollary} \label{corr: marginals1}
Suppose that the conditions in Theorem~\ref{thm: copula approx by mixture} hold.  Then, for any $\varepsilon > 0 $, there
 exist $R > 0$, $\pi_1, \ldots, \pi_R
  \in \Delta_R$, $\sigma_1, \ldots, \sigma_r \in (0, \infty)$ and $\bm \mu_1,
  \ldots, \bm \mu_r \in \mathbb{R}^M$ such that
 $\| q_{R,i} - \bm 1_{(0,1)}\| < \varepsilon $, for $i=1, \dots, M$ for both the $L_1$ and $L_\infty$ norms.
\end{corollary}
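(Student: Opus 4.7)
The plan is to derive the marginal bound directly from the joint bound established in Theorem~\ref{thm: copula approx by mixture}. The crucial observation is that, because $c$ is a copula density, each of its marginals equals the uniform density on $(0,1)$. So for any $i$ and any $u_i \in (0,1)$,
\[
\int_{(0,1)^{M-1}} c(\bm u)\, du_{-i} \;=\; 1,
\]
where $du_{-i}$ denotes integration over all coordinates except the $i$-th. By the definition of $q_{R,i}$ as the $i$-th marginal of $q_R$, subtracting gives the pointwise identity
\[
q_{R,i}(u_i) - 1 \;=\; \int_{(0,1)^{M-1}} \bigl(q_R(\bm u) - c(\bm u)\bigr)\, du_{-i}.
\]

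The first main step is to pass this identity through the two norms. For the $L_1$ norm, Tonelli's theorem yields
\[
\|q_{R,i} - \bm 1_{(0,1)}\|_1 \;\le\; \int_0^1 \int_{(0,1)^{M-1}} |q_R(\bm u) - c(\bm u)|\, du_{-i}\, du_i \;=\; \|q_R - c\|_1.
\]
For the $L_\infty$ norm, pulling the absolute value inside and using that $(0,1)^{M-1}$ has unit Lebesgue measure gives
\[
\|q_{R,i} - \bm 1_{(0,1)}\|_\infty \;\le\; \sup_{u_i} \int_{(0,1)^{M-1}} |q_R(\bm u) - c(\bm u)|\, du_{-i} \;\le\; \|q_R - c\|_\infty.
\]

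The second step is to apply Theorem~\ref{thm: copula approx by mixture} to pick the parameters $R,\pi_1,\dots,\pi_R,\sigma_1,\dots,\sigma_R,\bm\mu_1,\dots,\bm\mu_R$ so that $\|c - q_R\|_1 < \varepsilon$ (and $\|c - q_R\|_\infty < \varepsilon$ in the continuous case, where the $L_\infty$ conclusion of the theorem is available). Combining with the two displayed inequalities gives $\|q_{R,i} - \bm 1_{(0,1)}\| < \varepsilon$ for the relevant norm and each $i = 1,\dots,M$ simultaneously, since the bound on the joint controls all marginals at once.

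There is no genuine obstacle here: the entire content is the uniformity of copula marginals together with the fact that projection onto a lower-dimensional marginal is a contraction in $L_1$ and $L_\infty$ (using that $(0,1)^{M-1}$ has unit measure). The only subtlety worth flagging is that the $L_\infty$ conclusion requires the continuity hypothesis on $c$ invoked in the last sentence of Theorem~\ref{thm: copula approx by mixture}; without it, only the $L_1$ part of the corollary is available.
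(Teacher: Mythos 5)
Your proof is correct, and its skeleton is the same as the paper's: bound the marginal error by the joint error, then invoke Theorem~\ref{thm: copula approx by mixture}. The difference is in how the contraction step is justified. The paper routes the $L_1$ case through Lemma~\ref{Lemm: scheffe}, a total-variation argument based on Scheff\'e's identity showing $\|f-g\|_1 \geqslant \|f_T - g_T\|_1$ for an arbitrary Borel map $T$, applied here with the coordinate projection $T(\bm u) = u_i$ together with the observation that the $i$-th marginal of $c$ is $\bm 1_{(0,1)}$. You instead prove the contraction directly for projections via Tonelli, writing $q_{R,i}(u_i) - 1 = \int_{(0,1)^{M-1}} \bigl(q_R(\bm u) - c(\bm u)\bigr)\, du_{-i}$ and pulling the absolute value inside; this is more elementary and self-contained (no appeal to Scheff\'e's identity or to the total-variation characterization), though it buys less generality than the paper's lemma, which covers arbitrary measurable transformations rather than only marginalization. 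Your $L_\infty$ argument, absolute value inside the integral plus the unit Lebesgue measure of $(0,1)^{M-1}$, is also the natural expansion of the paper's terse remark that the $L_\infty$ case ``follows because the marginals of $c$ are uniform.'' Finally, your closing caveat is apt: the $L_\infty$ half of the corollary genuinely requires the continuity of $c$ invoked only in the last sentence of Theorem~\ref{thm: copula approx by mixture}, since an $L_1$ bound on the joint cannot control the supremum of the marginal difference; the paper's statement and one-line proof leave this hypothesis implicit.
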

Let $G$ be the cdf of a $M \times 1 $ random vector with density $g$ and marginals $G_1, \dots, G_M$.
Suppose that $\bm X \sim G$. We call the distribution of $\mathfrak{F}_G(\bm X)$ the copula of $G$, which we write as $C_G$.

Let $F$ be the CDF of a $M \times 1 $ random vector with marginals $F_1, \dots, F_M$. If
at least one of the marginals $F_i$ does not coincide with a marginal $G_i$, then
$\mathfrak{F}_F(\bm X) $ is a distribution on $(0,1)^M$, but it is not a copula. However, corollary~\ref{corr: equivalence of copulas} shows
that if $\bm X \sim G$, then the copula of $\mathfrak{F}_F(\bm X) $ is the copula of $G$.

\begin{corollary} \label{corr: equivalence of copulas}
If $\bm X \sim G$, then the copula of $\mathfrak{F}_F(\bm X)$ is $C_G$, which is the copula of $G$.
\end{corollary}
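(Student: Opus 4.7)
The plan is a direct computation using the definition of the copula of a joint distribution, reducing the claim to a statement about composition of marginal CDFs with their (generalized) inverses.

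First, set $\bm Y := \mathfrak{F}_F(\bm X) = (F_1(X_1), \ldots, F_M(X_M))$ and identify the marginal CDFs of $\bm Y$. For each coordinate $i$, the CDF of $Y_i = F_i(X_i)$ is
\[
H_i(t) := \Pr[F_i(X_i) \le t] = \Pr[X_i \le F_i^{-1}(t)] = G_i \circ F_i^{-1}(t), \qquad t \in (0,1),
\]
where I use the standard quantile-function manipulation together with $X_i \sim G_i$. So the vector of marginal CDFs of $\bm Y$ is $(H_1, \ldots, H_M) = (G_1 \circ F_1^{-1}, \ldots, G_M \circ F_M^{-1})$.

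Next, by definition $C_{\bm Y}$ (the copula of $\mathfrak{F}_F(\bm X)$) is the distribution of $\mathfrak{F}_{H}(\bm Y) = (H_1(Y_1), \ldots, H_M(Y_M))$. Plugging in $Y_i = F_i(X_i)$ and $H_i = G_i \circ F_i^{-1}$,
\[
H_i(Y_i) = (G_i \circ F_i^{-1} \circ F_i)(X_i) = G_i(X_i),
\]
so $\mathfrak{F}_H(\bm Y) = \mathfrak{F}_G(\bm X)$ almost surely. Therefore the distribution of $\mathfrak{F}_F(\bm X)$ under its own marginals is precisely the distribution of $\mathfrak{F}_G(\bm X)$, which by definition is $C_G$.

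The only subtlety, and the step I would treat most carefully, is the identity $F_i^{-1} \circ F_i = \mathrm{id}$ applied in both displays. Since the setting of the paper works with absolutely continuous marginals (so each $F_i$ is continuous and $X_i$ has no atoms), the generalized inverse $F_i^{-1}$ inverts $F_i$ except possibly on a set of flat points of $F_i$, which has $G_i$-probability zero; hence the identity holds almost surely, which is enough for the distributional conclusion. If one wishes to avoid even this mild caveat, one can replace $F_i$ and $G_i$ by their right-continuous versions and reason pointwise on the level sets of $F_i$, but the substance of the argument is unchanged.
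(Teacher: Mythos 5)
Your proof is correct and follows essentially the same route as the paper's: both compute the marginal CDFs of $\mathfrak{F}_F(\bm X)$ as $G_i \circ F_i^{-1}$ and then observe that $H_i(Y_i) = G_i(F_i^{-1}(F_i(X_i))) = G_i(X_i)$, so that the copula of $\mathfrak{F}_F(\bm X)$ is the distribution of $\mathfrak{F}_G(\bm X)$, i.e.\ $C_G$. The only differences are cosmetic and in your favor: the paper phrases the final distributional identification via characteristic functions whereas you note the stronger almost-sure equality $\mathfrak{F}_H(\bm Y) = \mathfrak{F}_G(\bm X)$, and you explicitly flag the generalized-inverse subtlety $F_i^{-1}\circ F_i = \mathrm{id}$, which the paper uses silently (and which is exact in the paper's intended setting, where $F=H$ has strictly increasing marginal CDFs with full support).
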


\begin{theorem} \label{thm: copula approx by mixture}
  Let $C$ be some arbitrary $M$-dimensional absolutely continuous copula with
  density $c$ and let $f$ be an approximation of $c$ as in theorem~\ref{corr: copula approx by mixture}. Then the copula of $f$ is also an approximation of $c$ in the $L_1$ norm.
\end{theorem}

\subsection{Universal approximation of multivariate distributions\label{SS: univ approx of general distns}}
Given that the approximating mixtures
\[\sum_{r = 1}^R \pi_r
\frac{\phi_{\mu_r, \sigma_r \bm I_M} \circ \mathfrak{F}_H^{-1}}{\prod_{i=1}^M h_i \circ H_i^{- 1}}
\]
can be transformed using $\mathfrak{F}_H^{-1}$ into normal mixtures, all the machinery that
exists for estimating finite normal mixtures (for example
{\cite{fruhwirth2006finite}}) or infinite normal mixtures (see
{\cite{kalli2011slice}} or \ {\cite{ishwaran2011gibbs}}) can be readily used
from a Bayesian perspective to obtain universal approximations to multivariate distributions.

The next section provides a simple simulated illustration of how such an approximation is implemented.

\subsection{Simple illustration of the universal approximation properties\label{SS: simple illustration}}

We now give a detailed example that illustrates how to apply our methodology and shows that a copula based multivariate approximation can be much more parsimonious
than that obtained by directly fitting a mixture of normals.

Consider the two-dimensional random vector $(X, Y)$ having joint distribution
\[ f (x, y)  = f_X (x) f_Y (y) c (F_X (x), F_Y (y)), \]
where each marginal is a mixture of univariate normal random variables,
\begin{align}
 f_X (x) & = f_Y (x) = \sum_{r = 1}^{R} \pi_{r} \phi_{\mu_{r},
   \sigma_{r}} (x_{})\notag
   \intertext{and the copula is}
 C (u, v) & = u^{1 - \alpha} v^{1 - \beta}  [u^{- \theta \alpha} + v^{- \theta
   \beta} - 1]^{- \frac{1}{\theta}},  \label{eq: example copula}
\end{align}
where $\alpha, \beta \in (0, 1)$ and $\theta > 0$

\begin{sloppypar}
We chose the following settings and parameter values: $f_X = f_Y$, $R = 6$, $\mu_{1:6} =\{- 9, - 5.4, -
1.8, 1.8, 5.4, 9\}$ and $\sigma_{r} = 1 / \sqrt{10}$, $\pi_r = 1 / R$,
$\alpha = 3 / 4$, $\beta = 1 / 2$ and $\theta = 20$.
\end{sloppypar}

We chose this example as it demonstrates the versatility  and power of our method because it is difficult to estimate
the density $f(x,y)$ with the usual estimators.
First, it is very difficult to
approximate the copula \eqref{eq: example copula} using either mixtures of Archimedean copulas or mixtures of
elliptical copulas because it is neither radially symmetric nor
exchangeable (see Sections~\ref{S: mixtures of archimedean copulas} and \ref{S: mixtures of elliptical copulas}).
This can also be checked directly from Figure~\ref{fig: example},
which shows a draw of 1000 points from the copula~\eqref{eq: example copula}.
Second, it is difficult  to approximate the joint distribution of
$X$ and $Y$ directly using mixtures of bivariate normals as
we chose the marginal distributions so that a mixture of bivariate normal densities would require up to 36 components; see
Figure~\ref{fig: example} for a plot of $n = 1000$ points from $f(x,y)$. Using our approach, a
six-component mixture of normals is enough to approximate each margin and, as
we show empirically,  a three-component mixture of normals is sufficient to
approximate the copula.  See Table~\ref{tab: marg lik estimates} for marginal likelihood estimates for 2 to 5 components
for the mixture in our approach and figure~\ref{fig: example} for an MCMC draw from that mixture.

\begin{figure}
\center
\begin{tabular}{@{}ll@{}}
\resizebox{150px}{150px}{\includegraphics{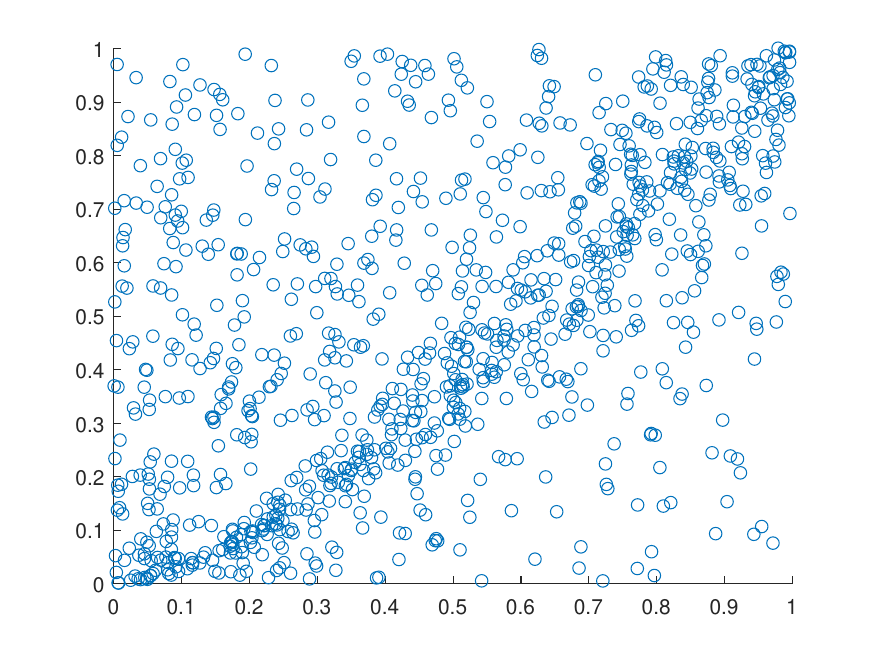}}
  &
  \resizebox{150px}{150px}{\includegraphics{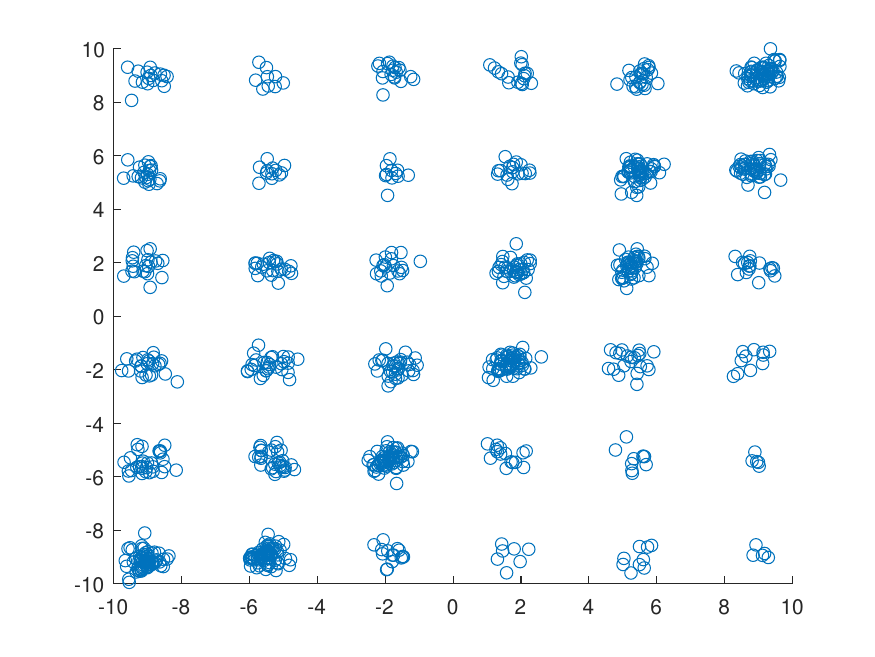}}
  \end{tabular}
   \resizebox{400px}{400px}{\includegraphics{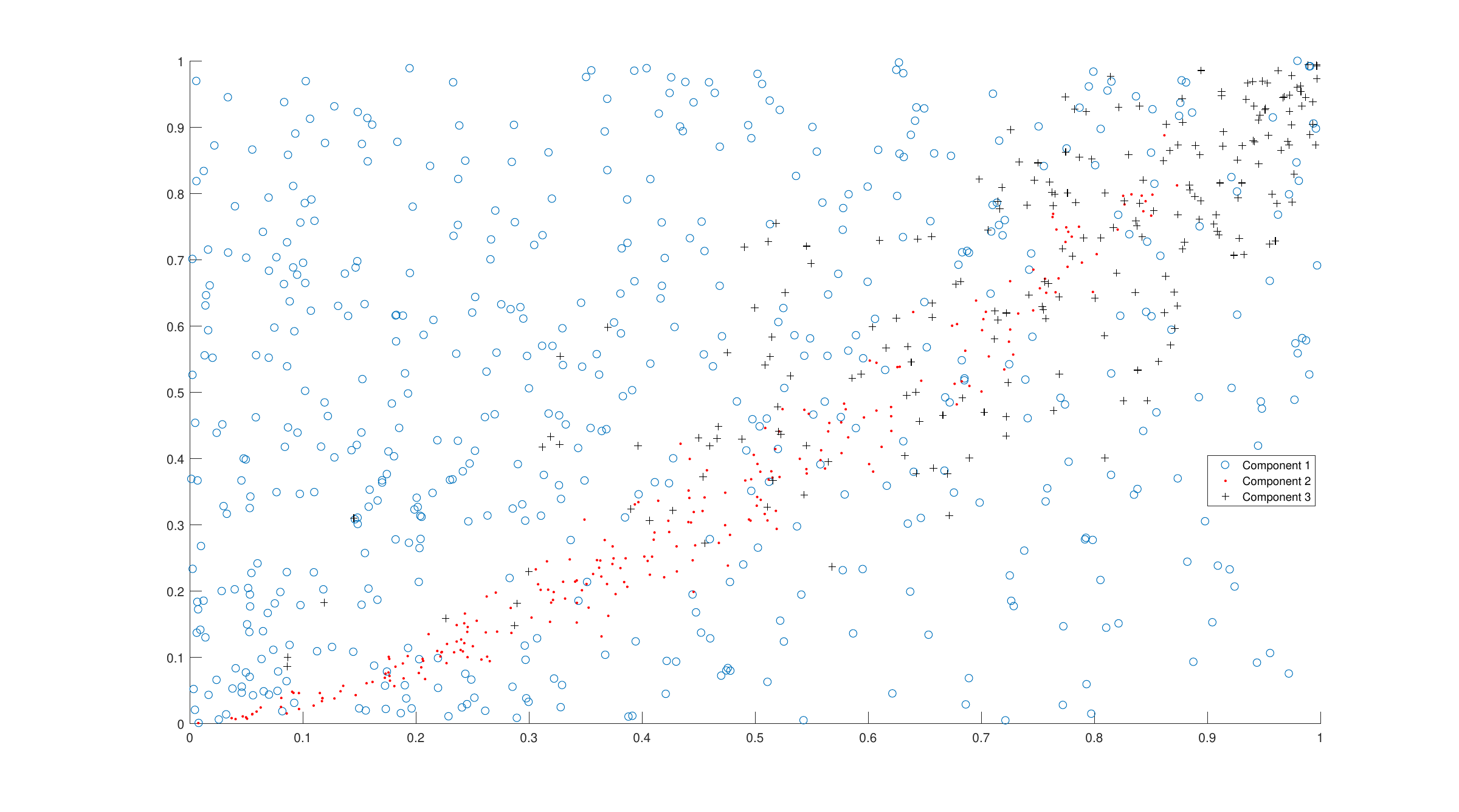}}
  \caption{top left panel: An example of a draw from the non-exchangeable copula; top right panel: Original data; bottom left panel:  An MCMC draw of the mixture indicators. Each of the components is
  labelled with a different color\label{fig: example} }
\end{figure}
\begin{table}[!h]
  \center
  \begin{tabular}{|c|c|c|c|c|}
    \hline
    $R$ & 2 & 3 & 4 & 5\\
    \hline
    Marg. Lik. & $- 2613.1$ & $- 2609$ & $-2623.8$ & $- 2637.6$\\
    \hline
  \end{tabular}
  \caption{Marginal likelihood estimates for different models computed using
  bridge sampling.\label{tab: marg lik estimates}}
\end{table}

We carried out the above analysis as follows.
\begin{enumerate}
  \item Draw a random sample $(x_i, y_i)$ for $i = 1,
  \ldots, n=1000$,  from the above model.  Appendix~\ref{app: drawing from example copula} shows how to
  draw from  the copula~\eqref{eq: example copula}.

  \item Estimate the marginal distributions using the empirical distribution
  functions $F_{X, n}$ and $F_{Y, n}$. Compute $z_i = \Phi^{- 1} \circ F_{X,
  n} (x_i)$ and $w_i = \Phi^{- 1} \circ F_{Y, n} (y_i)$ for $i = 1, \ldots,
  n$, where $\Phi$ is the CDF of the standard normal distribution.

  \item Approximate the joint distribution of $(z_i, w_i)$ using a mixture of
  multivariate normals.

  \item Apply $\Phi$ to recover the copula approximation.
\end{enumerate}
Step 3 can be carried out using any standard approach for estimating mixtures
of multivariate normals, and we applied textbook techniques from
\cite{fruhwirth2006finite}.

We adopt a Bayesian paradigm, but a
similar procedure can be carried out using a more classical approach. We repeated the model fitting
 for models with a different number of
components and chose the model with the highest marginal likelihood.
Independent multivariate normal and inverse Wishart priors are placed on the
parameters of the mixture components as described in Subsection 6.3.2 in
\cite{fruhwirth2006finite}. We then ran the MCMC algorithm 6.2 from
\cite{fruhwirth2006finite} for each model and its marginal
likelihood was computed by bridge sampling. All the MCMC computations were
carried out using the \texttt{bayesf\_version\_2.0} Matlab package
by \cite{fruhwirth2006finite}.

The  marginal likelihood estimates in Table~\ref{tab: marg lik estimates} were obtained by estimating
the copula density model \[\sum_{r = 1}^R \pi_r
    \frac{\phi_{\bm \mu_r, \sigma_r I_M} \circ \mathfrak{F}_H^{- 1}}{ \prod_{i=1}^M h_i \circ H_i^{- 1} }, \]
 specified by $H_1, \ldots, H_M$
standard normal distribution functions. This is valid because we use the same estimates of the marginals of $x$ and $y$ for all the fitted mixture models.

\section{Asymptotic Properties \label{S: asymptotic properties} }

We use the framework of \cite{massart2007concentration} to illustrate some of the asymptotic properties of estimators based on the mixtures constructed by our approach. We will first characterize the bracketing entropy numbers of the family of approximating mixtures by building on the results by \cite{maugis2011non}, then we will use those entropy numbers to apply Massart's framework and derive a non-asymptotic concentration inequality that will show contraction rates of Hellinger distance. Coupled with the approximation results from the previous section, the concentration inequality can be used to derive consistency of our estimators.

We will start by deriving results regarding the complexity of the family of mixture approximation. For that purpose, we will introduce some new notations. Denote by $\mathcal{F}$ the family of components of the mixture approximations, that is

\[ \mathcal{F} = \left\{ \frac{\phi_{\bm \mu, \bm \Sigma} \circ \mathfrak{F}_H^{- 1}}{ \prod_{i=1}^M h_i \circ H_i^{- 1} }; \bm \mu \in [- a, a]^M, 
 \bm \Sigma = \text{diag}(\sigma_1^2,...,\sigma_M^2), \sigma_1^2,...,\sigma_M^2 \in [\lambda_m, \lambda_M] \right\} \]
 
where $a>0$ and $\lambda_M>\lambda_m>0$. Notice, that we choose here to use diagonal matrices instead of scaled identity matrices for the covariance matrices of the components. The following results do not change by replacing $\Sigma$ with $\sigma I_M$.

Further, denote by $\mathcal{S}_R$ the family of mixtures with components in $\mathcal{F}$, that is

\[ \mathcal{S}_R = \left\{\sum_{r=1}^{R} \pi_r f_r; f_r \in \mathcal{F}, (p_1,...,p_r,...,p_R) \in \mathcal{P}_{R-1} \right\} \]

where $\mathcal{P}_{R-1}$ is the $(R-1)$-dimensional simplex, and where $f_r$ obviously are densities of the form $\frac{\phi_{\bm \mu_r, \Sigma_r} \circ \mathfrak{F}_H^{- 1}}{ \prod_{i=1}^M h_i \circ H_i^{- 1} }$. 

We are interested in characterizing the complexity of the family $\mathcal{S}_R$ through bracketing entropy in order to derive further asymptotic properties. 

If one defines the Hellinger distance by $d_H$ and let $\varepsilon>0$, then the bracketing entropy $\mathcal{H}_{[]}(\varepsilon,\mathcal{S}_R,d_H)$ is the logarithm of $\varepsilon$-bracketing covering number $\mathcal{N}_{[]}(\varepsilon,\mathcal{S}_R,d_H)$ with respect to $d_H$ defined as the minimum number of brackets $[c_l,c_u]$ covering $\mathcal{S}_R$, that is, given $\varepsilon$, there exists $N=\mathcal{N}_{[]}(\varepsilon,\mathcal{S}_R,d_H)$ different functions $c_{l1},...,c_{lN}$ and $c_{u1},...,c_{uN}$ such that $d_H(c_{lj},c_{uj})<\varepsilon$ for every $j=1,...,N$ and furthermore, for every $c \in \mathcal{S}_R$, there exists $j$ such that $c_{lj}<=c<=c_{uj}$. See e.g. \cite{van1996weak} or \cite{kosorok2008introduction} for further details.

\begin{proposition} \label{entropy numbers}
	For any $\varepsilon \in (0,1] $, and given the constant $K=\frac{5}{8} \left( 1 - 2^{- \frac{1}{4}} \right)$, then
	
	\[ \mathcal{N}_{[]}(\varepsilon,\mathcal{F},d_H) \leq 2^{\frac{11 M }{2}} M ^{2 M }  \left(a \frac{\lambda _M}{\lambda _m} \sqrt{\frac{1}{c \lambda _m}}\right){}^M  \left(\frac{1}{\varepsilon }\right)^{2 M } , \]
	which directly implies the $\varepsilon$-bracketing entropy $\mathcal{H}_{[]}(\varepsilon,\mathcal{S}_R,d_H)$ can be bounded above by the sum of a constant $K=K(R,M,a,\lambda_m,\lambda_M)$ and a term involving $\log(1/\varepsilon)$, that is
	
	\[  \mathcal{H}_{[]}(\varepsilon,\mathcal{S}_R,d_H) \leq K + R(2M+1)\log \left( \frac{1}{\varepsilon} \right)  \]

\end{proposition}

In this section, we show the results under the setting of known margins. Assume we have an i.i.d. sample $(\bm u_1, \ldots, \bm u_n)$ from a copula that has no singular component, and that has density $c_0$ with respect to Lebesgue's measure and where $\bm u_i \in \mathbb{R}^M$ for $i = 1, \ldots, n$. Let  \[ \text{KL} (c_0, c) = \int \log \left( \frac{c_0 (x)}{c (x)} \right) c (x)   \mathd x, \] be the Kullback-Leibler divergence between the the true copula $c_0$ and copula $c$. Indexing models in $\mathcal{S}_R$ by the number of components in the mixture $R$ and the dimension $M$, we are interested in estimators $\hat{c}$ minimizing the penalized contrast function
		\[  \text{crit}(R,M)=\gamma_n(c) + \text{pen}(R) \]
with respect to $c$ where $\gamma_n (c) = - \frac{1}{n} \sum_{i = 1}^N \log (c (\bm u_i))$, and where $\text{pen}(R,M)$ is a penalty function depending on $R$ and $M$. Optimizing over $c \in \mathcal{S}_R$ and using \cite{massart2007concentration} theorem 7.11 as illustrated in \cite{maugis2011non} theorem 1.1, we can prove the following result. 

\begin{theorem}\label{oracle inequality}
	Given two absolute constants $K_1$ and $K_2$, given the constant $ K=K(M,a,\lambda_m,\lambda_M) $ and given the penalty function satisfying the inequality
	\[  \text{pen}(R)  \geq K_1 \frac{R(2M+1)}{n}  \left( 1 + 2 K^2   + \log \left(  \frac{1}{1 \wedge \frac{R(2M+1)}{n} K^2}   \right)  \right),  \] there exists a model given by $\hat{R}$ that minimizes
				\[  \text{crit}(R)=\gamma_n( \hat{c}_{\hat{R}} ) + \text{pen}(R) \]
	
	and the estimator $\hat{c}_{\hat{R}}$ satisfies the inequality
	
	\[  E \left[d^2_H (c_0, \hat{c}_{\hat{R}}) \right] < K_2 \left[\inf_R  (\text{KL} (c_0, \mathcal{S}_R) + \text{pen} (R)) + \frac{1}{n} \right]  \]
	
	where $\text{KL} (c_0, \mathcal{S}_R) = \inf_{c \in \mathcal{S}_R} \text{KL}(c_0,c) $.

\end{theorem}

The theorem can be specialized to hold over a compact set that is a proper subset of $[0,1]^M$ in case $c_0$ is unbounded.

\section{Mixtures of Archimedean copulas\label{S: mixtures of archimedean copulas}}
Section \ref{SS: chracterization} summarizes some properties of Archimedean copulas and their mixtures. These properties are
used in section~\ref{SS: approximation Arch}  to derive some approximation
properties of mixtures of Archimedean copulas.

\subsection{Characterization of Archimedean copulas\label{SS: chracterization}}

Let $G$ be an Archimedean copula, that is a copula of the form
\begin{equation}
  G ({\bm {u}}) = \varphi \left( \sum_{m = 1}^M \varphi^{- 1} (u_m)  \right), \quad \text{where} \quad {\bm { u}} = (u_1, \dots, u_M),
 \label{archimedeancopula}
\end{equation}
and $\varphi$ is a completely monotone function. The stochastic
representation of an Archimedean copula  asserts that if a vector of uniform
random variables $\bm U = (U_1, \ldots, U_M)$ is distributed according to some
Archimedean copula distribution, then there exists a random variable $D$
with positive support such that
\begin{equation}
  \varphi (t) = E (e^{- Dt}) \label{stochasticrepresentation}
\end{equation}
and such that $U_1, \ldots,, U_M$ are conditionally independent given $D$. See for instance \cite{hofert2011efficiently} for more details about sampling Archimedean copulas and for additional references.

This means that given either the functional form in equation
(\ref{archimedeancopula}) or the stochastic representation in equation
(\ref{stochasticrepresentation}) based on $D$, the distribution of $U_1,
\ldots, U_M$ is exchangeable. That is, given any permutation $\sigma$ of the
set $\{ 1, \ldots, M \}$, the distributions of $U_1, \ldots, U_M$ and
$U_{\sigma (1)}, \ldots, U_{\sigma (M)}$ are identical.

If $G_1, \ldots, G_R$ are Archimedean copulas, then
$ G (\bm u ):= \pi_1 G_1 (\bm u ) + \cdots + \pi_R G_R (\bm u ) $
is a mixture of Archimedean copulas, and it is immediate that it is
exchangeable.

For additional details about the construction of Archimedean copulas, see for instance
\cite{mcneil2009multivariate}, {\cite{hofert2011efficiently}},
{\cite{mai2012simulating}} or {\cite{joe2014dependence}}.

\subsection{Approximation properties of a mixture of Archimedean copulas.  \label{SS: approximation Arch}}

Proposition~\ref{prop: non excha archi copula} shows that a mixture of Archimedean copulas $G$ is incapable of
approximating arbitrarily well any copula $C$ that is not exchangeable. To construct
an example, we need to find a copula that is not exchangeable. Given that $G$
is exchangeable, we need to find points in $(0, 1)^M$ that are separated
by $C$,  but not by $G$.

There are many non-exchangeable copulas. For instance, a typical Gaussian
copula in three dimensions or more is non-exchangeable. Even in 2 dimensions,
{\cite{durante2009construction}} constructs several bivariate
copulas that are non-exchangeable.

\begin{proposition} \label{prop: non excha archi copula}
  Let $c$ be the copula density of some non-exchangeable random vector.
  Then
  there exists an $\varepsilon > 0$ such that for all $R \in \mathbb{N}$,
  every $\pi = (\pi_1, \ldots, \pi_R)$ in the $R$-simplex and every possible
  set of Archimedean copula densities $g_1, \ldots, g_R$,
  \[ \left\| c - \sum_{r = 1}^R \pi_r g_r \right\| \geqslant \varepsilon > 0,
  \]
  for the $L_\infty$ norm. If $c$ is also continuous, then the result also holds for the
  $L_1$ norm.
\end{proposition}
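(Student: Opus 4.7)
The plan is to exploit the fact, recorded in Section~\ref{SS: chracterization}, that every Archimedean copula is exchangeable in its arguments, hence has an exchangeable density almost everywhere, and that the class of exchangeable densities is closed under convex combinations. Consequently every mixture $g(\bm u) := \sum_{r=1}^R \pi_r g_r(\bm u)$ of Archimedean copula densities satisfies $g(\bm u_\sigma) = g(\bm u)$ a.e., for every permutation $\sigma$ of $\{1, \ldots, M\}$, where $\bm u_\sigma := (u_{\sigma(1)}, \ldots, u_{\sigma(M)})$. The problem therefore reduces to showing that $c$ sits at strictly positive distance (in the relevant norm) from the linear subspace of exchangeable densities on $(0,1)^M$.

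To convert this qualitative fact into the desired quantitative bound, I would first fix a permutation $\sigma$ witnessing the non-exchangeability of $c$, i.e.\ one for which $c_\sigma(\bm u) := c(\bm u_\sigma)$ differs from $c(\bm u)$ on a set of positive Lebesgue measure, and set $2\varepsilon := \|c - c_\sigma\|$. For any admissible mixture $g$, the measure-preserving change of variables $\bm v = \bm u_\sigma$ inside the norm, combined with the exchangeability of $g$, gives $\|c_\sigma - g\| = \|c - g\|$, because both the $L_1$ and $L_\infty$ norms on $(0,1)^M$ are invariant under coordinate permutation. The triangle inequality then yields
\[ \|c - c_\sigma\| \;\leq\; \|c - g\| + \|g - c_\sigma\| \;=\; 2\|c - g\|, \]
so $\|c - g\| \geq \varepsilon$ uniformly in $R$, the weights $\pi$, and the Archimedean components $g_r$.

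It remains only to verify $\varepsilon > 0$, which is the one place where the two norm cases diverge. For the $L_\infty$ norm, the nonnegative function $|c - c_\sigma|$ is strictly positive on a set of positive Lebesgue measure by construction, so its essential supremum is strictly positive. For the $L_1$ norm, the extra hypothesis that $c$ is continuous makes $c - c_\sigma$ continuous and nonzero at some interior point; hence nonzero on an open neighborhood of that point, giving strictly positive Lebesgue integral. The main obstacle, if one can call it that, is conceptual rather than technical: once one recognizes that the whole mixture subfamily lies inside the exchangeable subspace, and that non-exchangeability of $c$ delivers a permutation $\sigma$ for which $c \neq c_\sigma$ on a set of positive measure, the bound follows from a single application of the triangle inequality.
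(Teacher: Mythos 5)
Your proof is correct and rests on exactly the same idea as the paper's: any mixture of Archimedean copula densities is exchangeable, and a triangle inequality then bounds the distance from the non-exchangeable $c$ to that class away from zero, with $\varepsilon$ half the distance between $c$ and a permuted copy $c_\sigma$. The only real difference is one of execution — you symmetrize at the level of norms, using $\left\| c_\sigma - g \right\| = \left\| c - g \right\|$ by permutation invariance, which treats the $L_\infty$ and $L_1$ cases uniformly, whereas the paper argues pointwise for $L_\infty$ (and adds an explicit numerical counterexample) and dispatches the $L_1$ case with a terse continuity-on-a-compact-subset remark whose missing neighborhood argument your version supplies explicitly.
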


In proposition~\ref{prop: non excha archi copula} and below, we define the $L_\infty $ norm for
$f:(0,1)^M\rightarrow \mathbb{R}$ as $\| f\|:= \sup_{\bm u \in (0,1)^M}|f(u)| $.

\section{Mixtures of Elliptical copulas\label{S: mixtures of elliptical copulas}}

An elliptical copula is the copula of a random vector that has an elliptical
distribution. Section~\ref{SS: characterization of elliptical copulas} describes some of the properties of elliptical copulas and
section~\ref{SS: approx by mixtures of elliptical copulas}  derives some of their
 approximation properties.

\subsection{Characterization of elliptical copulas}\label{SS: characterization of elliptical copulas}

\subsection*{Definition of an elliptical copula} An elliptical copula is the copula of a vector random variable $\bm X$ that is
elliptically distributed. The $M$-dimensional random vector $\bm X$ is called elliptically distributed with
location $\bm \mu \in \mathbb{R}^M$ and scale matrix $\bm \Sigma$, where $\bm \Sigma$ is
a symmetric and positive semi-definite matrix, if
\[ \bm X \xequal{d} \bm \mu + \bm {A}' \bm {Y}, \]
where $\bm Y$ is some spherically distributed random vector and $\bm \Sigma = \bm A' \bm A$.
$\bf \Sigma$ also admits the variance-correlation decomposition
\[ \bm \Sigma = \bm {S R S}, \]
where $\bm R$ is a correlation matrix and $\bm S$ is a diagonal matrix having standard
deviations on the main diagonal.

\subsection*{Example: Gaussian copula}
The simplest example of an elliptical distribution is the multivariate normal, $\bm X
\sim \mathcal{N} (\bm \mu, \bm \Sigma)$. In this case the $L_2$ norm of $\bm Y$ is
distributed as a chi-squared with $M$ degrees of freedom.

Let $F$ be the cdf of an $M \times 1 $ random vector, with marginal cdf's $F_1, \dots, F_M$.
   We define $\mathfrak{F}_F: \mathbb{R}^M \rightarrow (0,1)^M $ as $\mathfrak{F}_F(\bm x):= \left (F_1(x_1), \dots, F_M(x_M) \right )$
   In particular, if $F_X(\bm x)$ is the CDF of $\bm X \sim \mathcal{N}(\bm \mu, \bm \Sigma)$, then $\mathfrak{F}_{F_X}$
     is distributed as a Gaussian copula with correlation  matrix $\bm R$.

We now show that the distribution of $\mathfrak{F}_{F_X}$ is invariant under linear transformations, where, without loss of generality, we assume
that $\bm \Sigma$ is positive definite. Let $f_X(\bm x) $ be the density function of $\bm X$.
Then, the characteristic function of $\mathfrak{F}_{F_X}$ is
\begin{eqnarray*}
  \mathfrak{M} (\bm t) & = & \int_{\mathbb{R}^M} {\rm e}^{i \langle \mathfrak{F}_{F_X}, \bm t \rangle} f_X(\bm x)  \mathd \bm x\\
  & = & \int_{\mathbb{R}^M} {\rm e}^{i \langle \mathfrak{F}_{F_Z}, \bm t \rangle} f_Z(\bm z)
  \mathd \bm z
\end{eqnarray*}
by the change of variable $\bm x = \bm{Sz} + \bm \mu$. Clearly,
only the correlation matrix $\bm R$ is identified, but not the location and
scale parameters $\bm \mu$ and $\bm S$.

\begin{remark} \label{rem: identif of elliptic copula}
We note that the identification properties derived above for a Gaussian copula extend immediately
to any elliptical copula.
\end{remark}

\subsection{Approximation properties of mixtures of elliptical copulas \label{SS: approx by mixtures of elliptical copulas}}

Mixtures of elliptical  copulas cannot always approximate an arbitrary copula
$C$. The key concept we will use to construct a counterexample of why the
approximation breaks down is that of radial symmetry.

\subsection*{Definition of radial symmetry}
Let $\bm {1}_M$ be the vector of ones in the unit $M$-cube.
A copula $G$ is said to be radially symmetric if given any $\bm u \in (0, 1)^M$,  then $G (\bm u) = G (\bm {1}_M - \bm u) $.

If a copula $C$ is not radially symmetric, then it cannot be approximated by a mixture of radially symmetric
copulas because any such mixture would also be radially symmetric and
thus cannot separate points on $(0, 1)^M$ that are separated by $C$.

Proposition~\ref{Prop: radial symmetry}  shows that any elliptical copula is radially symmetric and
hence so is a mixture of elliptical copulas.

\begin{proposition} \label{Prop: radial symmetry}
\begin{enumerate}
\item [(i)]
 If $G$ is an elliptical copula, then $G$ is radially symmetric.
 \item [(ii)]
Suppose that  $ G =  \pi_1 G_{1, \bm{R}_1}+ \cdots +  \pi_R G_{1, \bm {R}_R}$
 is a mixture of radially symmetric copulas. Then, $G$ is radially symmetric.
 \end{enumerate}
\end{proposition}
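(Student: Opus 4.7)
\emph{Part (i).} The plan is to work through the stochastic representation of an elliptical distribution and then push the resulting symmetry of $\bm X$ about its location $\bm \mu$ through the marginal CDFs. Specifically, write $\bm X \xequal{d} \bm \mu + \bm A' \bm Y$ with $\bm Y$ spherical. Since spherical distributions are invariant under orthogonal transformations, taking the orthogonal transformation $-\bm I_M$ yields $\bm Y \xequal{d} -\bm Y$, hence $\bm X - \bm \mu \xequal{d} -(\bm X - \bm \mu)$, i.e., $\bm X \xequal{d} 2\bm \mu - \bm X$. This symmetry passes to each marginal (itself an elliptical, hence symmetric, distribution around $\mu_i$), so the marginal CDF $F_i$ satisfies $F_i(2\mu_i - x) = 1 - F_i(x)$ for every $x$.

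Using this identity componentwise gives
\[
\bm 1_M - \mathfrak{F}_{F_X}(\bm X) = \bigl(1 - F_1(X_1), \ldots, 1 - F_M(X_M)\bigr) = \bigl(F_1(2\mu_1 - X_1), \ldots, F_M(2\mu_M - X_M)\bigr) = \mathfrak{F}_{F_X}(2\bm \mu - \bm X).
\]
Because $2\bm \mu - \bm X \xequal{d} \bm X$, the right-hand side has the same distribution as $\mathfrak{F}_{F_X}(\bm X) = \bm U$, so $\bm 1_M - \bm U \xequal{d} \bm U$. Evaluating the CDF of both sides at $\bm u$ then yields $G(\bm u) = G(\bm 1_M - \bm u)$, which is the stated radial symmetry. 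For general (non-Gaussian) elliptical families the argument is identical, using only that sphericity implies $\bm Y \xequal{d} -\bm Y$ and that marginals of an elliptical distribution are elliptical in the same location $\mu_i$; this is exactly the scope of Remark~\ref{rem: identif of elliptic copula}.

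\emph{Part (ii).} This step is purely algebraic and follows from linearity: for every $\bm u \in (0,1)^M$,
\[
G(\bm u) = \sum_{r=1}^R \pi_r G_r(\bm u) = \sum_{r=1}^R \pi_r G_r(\bm 1_M - \bm u) = G(\bm 1_M - \bm u),
\]
where the middle equality uses the assumed radial symmetry of each component $G_r$. No additional machinery is required.

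\emph{Main obstacle.} There is no serious obstacle; the entire content sits in part (i), and even there the work reduces to recording that sphericity gives $\bm Y \xequal{d} -\bm Y$ and that this symmetry transports through the marginal CDFs via the identity $F_i(2\mu_i - x) = 1 - F_i(x)$. The only thing to be careful about is invoking the symmetry of the univariate marginals of $\bm X$, which one should justify by noting that any linear (in particular, coordinate-projection) image of an elliptical vector is again elliptical with the corresponding projected location, and univariate elliptical laws are symmetric about their location.
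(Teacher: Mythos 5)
Your proposal is correct and follows essentially the same route as the paper: part (i) uses the stochastic representation of an elliptical vector, the fact that sphericity of $\bm Y$ gives invariance under the orthogonal map $-\bm I_M$, and then pushes the resulting symmetry of $\bm X$ through the marginal CDFs to conclude $\bm 1_M - \bm U \xequal{d} \bm U$; part (ii) is the same linearity argument. The only cosmetic difference is that the paper sets the location to zero without loss of generality (justified by its earlier remark that an elliptical copula does not depend on location and scale), whereas you carry $\bm \mu$ explicitly and record the reflection identity $F_i(2\mu_i - x) = 1 - F_i(x)$, which if anything makes the marginal-symmetry step more transparent.
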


  \begin{remark}\label{re: remark on radial symmetry}
  It is interesting to note that
  radial symmetry  fails in the case of a finite mixture of
  elliptical distributions because when each component of the mixture is
  radially symmetric around a different point $\bm \mu_r \in \mathbb{R}^M$, then
  the radial symmetry fails overall unless the following equalities hold
  $\bm \mu_1 = \bm \mu_2 = \cdots = \bm \mu_R = \bm \mu$ for some $\bm \mu \in \mathbb{R}^M$.
  \end{remark}

Proposition~\ref{prop: elliptic copulas and radial symmetry} shows that a mixture of Gaussian copulas is incapable of approximating an arbitrary copula
$C$. All we need for a counterexample is a copula that is not radially symmetric.
\begin{proposition} \label{prop: elliptic copulas and radial symmetry}
  Let $c$ be the copula density of a random vector that is not radially symmetric.
  Then there exists $\varepsilon > 0$ such that for all $R \in \mathbb{N}$,
  every $\pi = (\pi_1, \ldots, \pi_R)$ in the $R$-simplex and every possible set of
  elliptical densities $g_1, \ldots, g_R$,
  \[ \left\| c - \sum_{r = 1}^R \pi_r g_r \right\| \geqslant \varepsilon > 0,
  \]
  for the $L_\infty$ norm. If $c$ is continuous, then the result also holds for the $L_1$ norm.
\end{proposition}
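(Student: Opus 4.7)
\medskip
\noindent\textbf{Proof plan.} The plan follows the same template as Proposition~\ref{prop: non excha archi copula}: identify a symmetry that every admissible mixture $g:=\sum_{r=1}^R \pi_r g_r$ is forced to satisfy but that $c$ violates, and quantify the resulting gap. By Proposition~\ref{Prop: radial symmetry}(i), each elliptical copula density $g_r$ is radially symmetric, and part (ii) together with linearity of the mixture shows that $g$ is radially symmetric too; in particular, $g(\bm u)=g(\bm 1_M-\bm u)$ for every $\bm u\in(0,1)^M$.

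For the $L_\infty$ bound I would proceed as follows. Since $c$ is not radially symmetric, there exists $\bm u^*\in(0,1)^M$ with $\delta:=|c(\bm u^*)-c(\bm 1_M-\bm u^*)|>0$. Set $\varepsilon:=\delta/2$. Applying the triangle inequality at $\bm u^*$ and at $\bm 1_M-\bm u^*$, and using $g(\bm u^*)=g(\bm 1_M-\bm u^*)$, gives
\[
\delta \;\le\; |c(\bm u^*)-g(\bm u^*)| + |g(\bm 1_M-\bm u^*)-c(\bm 1_M-\bm u^*)| \;\le\; 2\,\|c-g\|_\infty,
\]
and this bound is uniform over $R$, $\pi$, and the choice of elliptical components $g_r$.

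For the $L_1$ bound under continuity of $c$, I would upgrade the pointwise gap to one on a set of positive Lebesgue measure. Continuity of $c$ and of $\bm u\mapsto c(\bm 1_M-\bm u)$ yields an open neighborhood $V\subset(0,1)^M$ of $\bm u^*$ on which $|c(\bm u)-c(\bm 1_M-\bm u)|\ge \delta/2$. The substitution $\bm v=\bm 1_M-\bm u$ has unit Jacobian, and combined with the radial symmetry of $g$ this yields
\[
2\,\|c-g\|_1 \;=\; \int_{(0,1)^M}\!\!\bigl(|c(\bm u)-g(\bm u)|+|c(\bm 1_M-\bm u)-g(\bm 1_M-\bm u)|\bigr)\,d\bm u \;\ge\; \int_V |c(\bm u)-c(\bm 1_M-\bm u)|\,d\bm u \;\ge\; \tfrac{\delta}{2}\lambda(V),
\]
so $\varepsilon:=\delta\,\lambda(V)/4>0$ works, again uniformly in $R$, $\pi$, and the $g_r$.

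The only subtle step is converting the symmetry $G(\bm u)=G(\bm 1_M-\bm u)$ stated at the level of the CDF in Proposition~\ref{Prop: radial symmetry} into the pointwise density identity $g(\bm u)=g(\bm 1_M-\bm u)$. This follows on differentiating (the map $\bm u\mapsto \bm 1_M-\bm u$ has Jacobian determinant $(-1)^M$, absolute value one), and the extension to the mixture density is immediate by linearity. Once this is in hand, the counterexample is mechanical and reuses exactly the obstruction-based argument of Proposition~\ref{prop: non excha archi copula}, with "radial symmetry" playing the role of "exchangeability".
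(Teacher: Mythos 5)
Your proposal is correct and follows essentially the same route as the paper: the mixture inherits radial symmetry ($g(\bm u)=g(\bm 1_M-\bm u)$), and the triangle inequality at a point where $c(\bm u^*)\neq c(\bm 1_M-\bm u^*)$ forces $\|c-g\|_\infty\geq \delta/2$ uniformly in $R$, $\pi$, and the $g_r$. Your $L_1$ argument (continuity gives a neighborhood $V$ where the asymmetry persists, then the unit-Jacobian substitution $\bm v=\bm 1_M-\bm u$ yields $\|c-g\|_1\geq \delta\lambda(V)/4$) is a careful filling-in of what the paper dispatches in one sentence, and your handling of the CDF-to-density symmetry is a legitimate extra precaution; the paper's concrete Clayton computation is only an illustration that non-radially-symmetric copulas exist, so omitting it loses nothing since the proposition takes such a $c$ as hypothesis.
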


\section{Empirical illustration: Application to the dependence between large financial firms\label{S: empirical illustrations}}
We  follow Section 4 of \cite{OhPpatton2013} using the same data and fit our
flexible approximation to a copula that models the dependence between seven large financial  institutions (Bank
of America, Citigroup, Bank of New York, Goldman Sachs, J.P. Morgan, Wells
Fargo and Morgan Stanley) over the period 2000-12-25 to
2011-01-05 with a total of 2521 observations. We show that our approach provides a better fit and is more parsimonious.

Let $r_{i,t},i=1, \dots,7$, be the return for the $i$th firm at time $t$, and $r_{m,t}$ the return on the S\&P 500 index at time $t$.
\cite{OhPpatton2013} fit the following model to the data using simulated method of moments estimation.
\begin{eqnarray*}
  r_{i, t} & =  \phi_{0, i} + \phi_{1, i} r_{i, t - 1} + \phi_{2, i} r_{m, t
  - 1} + \varepsilon_{i, t}, \quad  \varepsilon_{i, t}  =  \sigma_{i, t} \eta_{i, t}\\
  \sigma_{i, t}^2 & =  \omega_i + \beta_i \sigma_{i, t - 1}^2 + \alpha_{1, i}
  \varepsilon_{i, t - 1}^2 + \gamma_{1, i} \varepsilon_{i, t - 1}^2
  \mathbbm{1} (\varepsilon_{i, t - 1} \leqslant 0)\\
  & + \alpha_{2, i} \varepsilon_{m, t - 1}^2 + \gamma_{2, i}
  \varepsilon_{m, t - 1}^2 \mathbbm{1} (\varepsilon_{m, t - 1} \leqslant 0)\\
  r_{m,t} & =  \phi_{0m} + \phi_{1m} r_{m,t-1} + \varepsilon_{m,t} , \varepsilon_{m,t} = \sigma_{m,t}\eta_{m,t} \\
  \sigma_{m,t}^2 & = \omega_m + \beta_m \sigma_{m, t - 1}^2 + \alpha_{1, i}
  \varepsilon_{m, t - 1}^2 + \gamma_{m} \varepsilon_{m, t - 1}^2
  \mathbbm{1} (\varepsilon_{m, t - 1} \leqslant 0)\\
\end{eqnarray*}
where $i = 1, \ldots, 7$, $t = 1, \ldots, 2521$ and where $\eta_{i, t} \sim
\mathcal{N} (0, 1)$. \cite{OhPpatton2013} then  estimate the $\eta_{i,t}$ and fit a Gaussian copula to $\wh \eta_t = (\wh \eta_{1t}, \dots, \wh \eta_{7t})^{\transp}$
to study the joint dependence of the stock returns.

Let $u_{i, t} = \Phi (\eta_{i, t})$, where $\Phi$ is the
cumulative distribution function of a standard normal (alternatively let
$v_{i, t} = F_T (\eta_{i, t})$, where $F_T$ is the empirical distribution
function). We then fitted two kinds of models to $\tmmathbf{u}$ (or $\tmmathbf{v})$
\begin{enumerate}
  \item [(i)] Mixture I. A mixture of Gaussian copulas $\sum_{r = 1}^R \pi_r c (\cdummy,
  \tmmathbf{C}_r)$, where the $\tmmathbf{C}_r$ are correlation matrices.

  The number of parameters in this model is $R \frac{M (M - 1)}{2} + R - 1$
  where $M = 7$.

  \item [(ii)] Mixture II. A mixture of normals to $\Phi^{- 1} (v_{i, t})$ and then recover the copula.

  The number of parameters in this model is $R \frac{M (M + 3)}{2} + R - 1 - 2
  M$. The $-2M$ term arises in the last expression occurs because when in fitting a copula, $M$ means and $M$ variances are not determined.
\end{enumerate}

We note that Mixture I with $R = 1$ is the \cite{OhPpatton2013} approach.

\begin{table}[!ht]
   \caption{BIC values for the mixture of Gaussian copula and approximating mixture for estimating the distribution of
   $\eta_t$ flexibly.\label{table: BIC values for two methods fin data}}
   \begin{center}
 \begin{tabular}{ccc}
    \# components & Mixture of Gaussian copulas&
     Approximating mixture\\
     1 & 36374 & 36374\\
     2 & $- 12912$ & $- 17146 $\\
     3 & $- 12786$ & $- 16930$\\
     4 & $- 12660$ & $- 16714$
   \end{tabular}
   \end{center}
 \end{table}

Table~\ref{table: BIC values for two methods fin data} reports the BIC values for each of the 4 models for each of Mixture I and Mixture 2.
The table shows that a Gaussian copula provides an inadequate fit and the mixture of $R = 2$
Gaussian copulas provides the best fit if we use a mixture of Gaussian copulas. The table also shows that the best approximating mixture has two components (BIC of $-17146$) and provides a far better fit than the best mixture of Gaussian copulas (BIC of $-12912$).  If we take all models as equally likely, and use $\exp(-\frac12 BIC)$ as an estimate of the marginal likelihood of each model under flat priors, then the ratio of the posterior probability of the best approximating model to the Gaussian copula models is
$\exp(26670)$ and the ratio of the best approximating model to the best approximation by a mixture of Gaussian copulas is $\exp(2117)$.

\section{Conclusion \label{S: conclusion}}
Our article provides fundamental tools for approximating any copula arbitrarily well and uses these to propose a practical family of mixtures to provide such an
approximation. We can then use this approximation to construct a practical  copula-based approach for approximating any multivariate distribution arbitrarily well.
Such a copula approach for universally approximating multivariate distributions is attractive as it allows us to control the degree of approximation of the marginal distributions as well as providing a flexible way of approximating the joint dependence. Furthermore, the approach is easy to implement and satisfies good asymptotic properties.
Thus, our approach can provide an attractive alternative to approximating multivariate distributions by a mixture of normals.
We also study the approximation properties of mixtures of Gaussian copulas or mixtures of Archimedean copulas and show that neither family of mixtures
can  approximate a general copula arbitrarily well.

Furthermore, the universal approximation results proved in this paper are theoretical and of a probabilistic/analytic nature, and thus are essential for further statistical analysis of the problem. In fact, they constitute standard density results (like showing that a continuous function under certain assumptions can be approximated by polynomials and splines) that are a cornerstone for all sieve-based non-parametric estimation techniques and are a pre-requisite for further analysis. This means that they bring the whole machinery of mixture modeling to bear on the problem of non-parametric estimation of copulas. Given that the results show it is legitimate to use our mixture based model under certain conditions to approximate copulas, then the standard mixture machinery can then be legitimately used to estimate that model and hence the copula. 
In particular, if one wants to use nonparametric copula estimation using sieves built from mixtures, or Dirichlet process mixtures in a Bayesian setting, then the results are both a pre-requisite and foundational. See, for example, the way sieves are constructed in say \cite{shen1997methods} or \cite{chen2007large}.

\section*{Acknowledgement}
Robert Kohn's research was partially supported by an Australian Research Council grants  DP150104630 and CE140100049.

\appendix
\section{Proofs \label{app: proofs}}

\begin{proof}[\textbf{Theorem~\ref{bacharoglu_theorem}}]
  The proof follows from {\cite{bacharoglou2010approximation}}.  See in that paper theorem 2.4 for the compact
  support approximation case, corollary 2.5(2) for the $L_1$ approximation case, and
  corollary 2.5(3) for the $L_{\infty}$ approximation case.
\end{proof}

\begin{proof}  [ \textbf{Theorem~\ref{thm: one dim approx} }]
We first prove the theorem for the $L_1$ norm.
Let $U$ be a random variable with density $g$. Applying the
  transformation $X= H^{- 1} (U)$ yields an absolutely continuous random variable
  with support on the whole real line with density $f = h \cdot (g
  \circ H^{- 1})$. Furthermore, $f$ is clearly both continuous and bounded. We
  can apply theorem \ref{bacharoglu_theorem} to get the following
  approximation property.

  For every $\varepsilon > 0$, there exists a sequence $(\alpha_n)_{n \in
  \mathbb{N}}$ in $\mathcal{A}^+$ and an integer $R$ such that in the
  enumeration $(\phi_n)_n$ specified by $\phi_{\mu,\frac{1}{ k}} (x )$, where
  $k \in \mathbb{N}$, $\mu \in \mathbb{Q}$, the convex combination
  \begin{align*}
   \sum_{r = 1}^R \pi_r  \phi_r,
   \quad \text{with} \quad \pi_r:=\frac{\alpha_r}{\sum_{r' = 1}^R \alpha_{r'}},
   \end{align*}
  is arbitrarily close to $f$. Denoting the normalized weights by $\pi_r$ and
  the normal densities parameters in the enumeration by $\mu_1, \ldots, \mu_R,
  \sigma_1, \ldots, \sigma_R$ yields the required result.
  \[ \left\| f - \sum_{r = 1}^R \pi_r \phi_{\mu_r, \sigma_r } \right\| <
     \varepsilon . \]
  Finally, applying the transformation $H$ yields
  \begin{align*}
    \varepsilon & >  \left\| f - \sum_{r = 1}^R \pi_r \phi_{\mu_r, \sigma_r }
    \right\|_1\\
    & =  \int_{- \infty}^{\infty} \left| f (x) - \sum_{r = 1}^R \pi_r
    \phi_{\mu_r, \sigma_r } (x) \right| \mathd x\\
    & =  \int_0^1 \left| \frac{f \circ H^{- 1}}{h \circ H^{-1}} (u) - \sum_{r = 1}^R \pi_r
   \frac{ \phi_{\mu_r, \sigma_r } \circ H^{- 1}}{h \circ H^{-1} }  (u) \right|  \mathd u\\
    & =  \int_0^1 \left| g (u) - \sum_{r = 1}^R \pi_r  \frac{\phi_{\mu_r,
    \sigma_r } \circ H^{- 1}}{h \circ H^{-1} }  (u) \right|\mathd u\\
    & = \left\| g - \sum_{r = 1}^R \pi_r \frac{ \phi_{\mu_r, \sigma_r} \circ
    H^{- 1}}{h \circ H^{-1} } \right\|.
  \end{align*}
We now consider the $L_\infty$ case. Suppose the result does not hold for this case. Then, there exists an $\varepsilon > 0 $, such that
for any $R \in \mathbb{N}$, $(\pi_1,
  \ldots, \pi_R) \in \Delta_R$ (the $R$-simplex), $\mu_1, \ldots, \mu_R \in
  \mathbb{R}$ and $\sigma_1, \ldots, \sigma_R \in (0, \infty)$ such that
  \[ \left\| g - \sum_{r = 1}^R \pi_r \phi_{\mu_r, \sigma_r} \circ H^{-
     1}\right\|_\infty \geq  \varepsilon .  \]
     This implies that the result of theorem  does not hold for the $L_1$ norm as $g(\cdot)$ is continuous, providing a contradiction.
\end{proof}

\begin{proof} [ \textbf{Theorem \ref{corr: copula approx by mixture} }]
We prove the theorem for the $L_1$ norm.  The proof for the $L_\infty$ norm is similar to that in the proof
of theorem~\ref{thm: one dim approx}.
  Applying the inverse transformation $\mathfrak{F}_H^{- 1} :  [0, 1]^M \rightarrow
 \mathbb{R}^M$ yields an $\mathbb{R}^M$ random vector with density
  \[ f (\bm x) = c (\mathfrak{F}_H (\bm x)) \prod_{j = 1}^M h_j (x_j). \]
The function $f$ is trivially in
  $L_1$ (with respect to Lesbesgue measure) as it is the density of an absolutely continuous random vector.
  Furthermore $f \in L_{\infty}
  \cap C (\mathbb{R}^M)$ because it is bounded and continuous.
  Suppose $\varepsilon > 0$ is given. Applying theorem
  \ref{bacharoglu_theorem} to $f$, there exists a sequence $(\alpha_n)_{n \in
  \mathbb{N}}$ in $\mathcal{A}^+$ and an integer $R$ such that in the
  enumeration $(\phi_n)_n$ specified by $\phi_{\frac{1}{k}} (\bm x - \bm \mu)$ where
  $k \in \mathbb{N}$, $\bm \mu \in \mathbb{Q}^M$, the convex combination
\begin{align*}
 \sum_{r = 1}^R \pi_r  \phi_r , \quad \text{where} \quad \pi_r:= \frac{\alpha_r}{\sum_{r' = 1}^R  \alpha_r^\prime}.
 \end{align*}

\begin{sloppypar}
  If the mean vector and variance parameters corresponding to the
  enumeration are $\bm \mu_1, \ldots, \bm \mu_R, \sigma_1, \ldots, \sigma_R$, then
\end{sloppypar}

	\[ \left\| f - \sum_{r = 1}^R \pi_r \phi_{\bm \mu_r, \sigma_r \bm I_m} \right\| <
     \varepsilon . \]
  Explicitly writing the previous expression and applying the transformation
  $H$ yields
  \begin{eqnarray*}
    \varepsilon & > & \int_{\mathbb{R}^M} \left| f (\bm x) - \sum_{r = 1}^R \pi_r
    \phi_{\bm \mu_r, \sigma_r \bm I_m} (\bm x) \right| \mathd \bm x\\
    & = & \int_{(0, 1)^M} \left| c (\bm u) - q_R(\bm u ) \right| \mathd \bm u
  \end{eqnarray*}
  with $q_R(\bm u ) :=\sum_{r = 1}^R \pi_r
    \frac{\phi_{\bm \mu_r, \sigma_r \bm I_m} \circ \mathfrak{F}_H^{- 1}}{ \prod_{i=1}^M h_i \circ H_i^{- 1} } (\bm u)$.
\end{proof}

\begin{proof}[\textbf{Corollary~\ref{corr: marginals1}}]
We need the following standard result that is adapted from {\cite{devroye2012combinatorial}}.

Let $T$ be a Borel measurable mapping from $\mathbb{R}^M$ into $\mathbb{R}^L$ and let $f$ and $g$ be the density functions of two
arbitrary  $\mathbb{R}^M$ random vectors and $f_T$ and $g_T$ be respectively the densities of the mapped random vectors then
  \[ \| f - g \|_1 \geqslant \| f_T - g_T \|_1 \].
This is proved as follows. Let $f$ and $g$ be the densities of $X$ and $Y$ respectively.
  \begin{eqnarray*}
    \| f - g \|_1 & = & 2 \sup_{A \in \mathcal{B}(\mathbb{R}^M)} | \Pr \{ X \in A \} - \Pr
    \{ Y \in A \} |\\
    & \geqslant & 2 \sup_{A \in \mathcal{B}(\mathbb{R}^L)}| \Pr \{ T (X) \in A \}- \Pr \{
    T (Y) \in A \nobracket \} |\\
    & = & \int_{\mathbb{R}^M} | f_{T (X)} - g_{T (X)} | \mathd \mu,
  \end{eqnarray*}
  where the first line is Scheff{\'e}'s identity (theorem 5.1 in
  {\cite{devroye2012combinatorial}}) and the second line follows from theorem
  5.2 in the same reference.

A simple application of that result now yields our corollary. Consider the transformation $T(\bm u):= u_i$.
The proof for the $L_1$ norm now follows immediately. The proof for the $L_\infty$ norm follows because the marginals of $c$ are uniform.
\end{proof}

\begin{proof} [ \textbf{Corollary~\ref{corr: equivalence of copulas} }]
Let $\bm V = \mathfrak{F}_F(\bm X)$, let $F_V$ be the CDF of $\bm V$ with marginals $F_{V,i}$  and let $C_V$ be the copula of $\bm V$.
Let $W_i:= F_{V,i}(V_i), i=1, \dots, M$, $\bm W:=(W_1, \dots, W_M) $. Then,
\begin{align*}
w_i& = F_{V,i}(v_i) = G_i(F_i^{-1}(v_i) ) = G_i(F_i^{-1}(F_i(x_i)) )= G_i(x_i).
\end{align*}
The characteristic function of $\bm  W$ is
\begin{align*}
    \mathfrak{M} (\bm{t}) & =  \int_{\mathbb{R}^M} \exp \left (i \left (
    \sum_{i=1}^M t_i w_i \right ) \right )   g (\bm {x})
    \mathrm{d} \bm{x} = \int_{\bm{R}^M} \exp \left (i \left (
    \sum_{i=1}^M t_i G_i(x_i) \right ) \right )  g (\bm {x})
    \mathrm{d} \bm {x},
     \end{align*}
which is the characteristic function of $C_G$.
\end{proof}

\begin{proof} [ \textbf{ Theorem \ref{thm: copula approx by mixture}  }]

Assume there exists $f_{\varepsilon}$ such that $| c - f_{\varepsilon} | < \varepsilon$.

Let $f_1, \ldots, f_M$ and $F_1, \ldots, F_M$ be respectively the marginal
densities and distribution functions of $f_{\varepsilon}$ and let $f_{\pi} = f_1
\cdots f_M$ be the product of the marginal densities.

Let $v := v (u) = (F_1^{- 1} (u_1), \ldots, F_M^{- 1} (u_M))$ be the
point in $[0, 1]^M$ that results by applying the transformation $F_m^{-
1}$ to each coordinate $u_m$ of $u = (u_1, \ldots, u_M)$.

Thus we an write $f_{\varepsilon} (v) = f_{\pi} (u) c_f (u)$ where $c_f$ is
the copula density of $f_{\varepsilon}$.

It is well know that the Kullback-Leibler divergence satisfies the following
inequality with respect to the $L_1$ norm
\[ \| c_f - c \|_1^2 \leqslant 2 KL (c_f, c) \]
so that in order to bound the $L_1$ norm from above, we will try to bound the
Kullback-Leibler divergence
\begin{eqnarray*}
  KL (c_f, c) & = & \int \log \left( \frac{c_f (u)}{c (u)} \right) c_f
  (u) \mathd u\\
  & = & \int \log \left( \frac{c_f (u)}{c (u)} \times \frac{f_{\pi}
  (u)}{f_{\pi} (u)} \right) c_f (u) \mathd u\\
  & = & \int \log \left( \frac{f_{\varepsilon} (v)}{c (u)} \right) c_f (u)
  \mathd u - \int \log (f_{\pi} (u)) c_f (u) \mathd u\\
  & = & \int \log \left( \frac{f_{\varepsilon} (v)}{c (u)} \times
  \frac{f_{\varepsilon} (u)}{f_{\varepsilon} (u)} \right) c_f (u) \mathd u -
  \int \log (f_{\pi} (u)) c_f (u) \mathd u\\
  & = & \int \log \left( \frac{f_{\varepsilon} (u)}{c (u)} \right) c_f (u)
  \mathd u + \int \log \left( \frac{f_{\varepsilon} (v)}{f_{\varepsilon} (u)}
  \right) c_f (u) \mathd u - \int \log (f_{\pi} (u)) c_f (u) \mathd u
\end{eqnarray*}
so that
\[ KL(c_f, c) \leqslant \left| \int \log \left( \frac{f_{\varepsilon}
   (u)}{c (u)} \right) c_f (u) \mathd u \right| + \left| \int \log \left(
   \frac{f_{\varepsilon} (v)}{f_{\varepsilon} (u)} \right) c_f (u) \mathd u
   \right| + \left| \int \log (f_{\pi} (u)) c_f (u) \mathd u \right| \]

The first term can be made arbitrarily small by theorem 3. The last term
can be made arbitrarily small by corollary \ref{corr: marginals1}, as $f_{\pi} (u)$ can
be made arbitrarily close to 1. Finally, if we assume that the transformation
$H$ has a continuous density on $\mathbbm{R}$ (Assuming that $H \in C_1
(\mathbbm{R})$), then $f_{\varepsilon} (v)$ can be made arbitrarily close to
$f_{\varepsilon} (u)$ by the continuity of $f_{\varepsilon}$ as $u$ can be made
arbitrarily close to $v$ by corollary \ref{corr: marginals1}.

\end{proof}

\begin{proof} [\textbf{Proposition~  \ref{entropy numbers} }]
	
The proof follows closely the argument of theorem 2.1 \cite{maugis2011non}. For that argument to hold, we need to prove two additional results that are necessary to check for $\mathcal{F}$. As the argument requires the construction of a lattices over the parameters of $\mathcal{F}$, given that we are not dealing with multivariate normal densities, but with multivariate normal densities applied after some monotonic transformations, we need to check that the brackets constructed for the multivariate normal density case can be constructed in the same here, given both the multivariate normal density family and $\mathcal{F}$ share the same parameter space. 

The first result is that proposition C.1 in \cite{maugis2011non} dealing with an upper bound on the ratio of two normal densities also hold in the case of densities in $\mathcal{F}$ because the denominators in the densities expressions simplify and because $[0,1]^M \subset \mathbb{R}^M$.

The second result is that proposition C.3 in \cite{maugis2011non} dealing with the Hellinger distance between the upper and lower functions in the bracket can be computed in the same way. Please note that

\begin{eqnarray*}  d^2_H (\phi_{\tmmathbf{\mu}_1, \tmmathbf{\Sigma}_1}, \phi_{\tmmathbf{\mu}_2,  \tmmathbf{\Sigma}_2}) & = & 2 - 2 \int_{\mathbb{R}^M} \sqrt{\phi_{\tmmathbf{\mu}_1, \tmmathbf{\Sigma}_1} (\tmmathbf{x})  \phi_{\tmmathbf{\mu}_2, \tmmathbf{\Sigma}_2} (\tmmathbf{x})} hd  \tmmathbf{x}\\  & = & 2 - 2 \int_{[0, 1]^M} \sqrt{\frac{\phi_{\tmmathbf{\mu}_1,  \tmmathbf{\Sigma}_1} \circ \mathfrak{F}_H^{- 1} (\tmmathbf{u})}{\prod_{i =  1}^M h_i \circ H_i^{- 1} (u_i)} \times \frac{\phi_{\tmmathbf{\mu}_2,  \tmmathbf{\Sigma}_2} \circ \mathfrak{F}_H^{- 1} (\tmmathbf{u})}{\prod_{i =  1}^M h_i \circ H_i^{- 1} (u_i)}} d \tmmathbf{u}\\  & = & d^2_H \left( \frac{\phi_{\tmmathbf{\mu}_1, \tmmathbf{\Sigma}_1} \circ  \mathfrak{F}_H^{- 1}}{\prod_{i = 1}^M h_i \circ H_i^{- 1}},  \frac{\phi_{\tmmathbf{\mu}_2, \tmmathbf{\Sigma}_2} \circ \mathfrak{F}_H^{-  1}}{\prod_{i = 1}^M h_i \circ H_i^{- 1}} \right)
\end{eqnarray*}

The remainder of the proof for the calculation of the bracketing entropy for $\mathcal{F}$ proceeds in the construction in the lattice in exactly same way for multivariate normal densities (proof of theorem 2.2 in \cite{maugis2011non}).

The calculation of the entropy for $\mathcal{S}_R$ proceeds from the calculation for $\mathcal{F}$ and from theorem 2 in \cite{genovese2000rates}.
	
\end{proof}

\begin{proof} [\textbf{Theorem~  \ref{oracle inequality} }]

In the statement of the theorem, $K$ is given by
\[  K = \sqrt{\pi} + \sqrt{\log (18 \pi e^2)} + \sqrt{\log \left( a\sqrt{\frac{8}{c_1 \lambda_m}} \right)} + \sqrt{\log \left( 8\frac{\lambda_M}{\lambda_m} \right)} + \sqrt{\log \left( 9 \sqrt{2} M\right)}  \]

The remainder proceeds in exactly the same way as the proof of theorem 2.1 in \cite{maugis2011non}. In particular, as cited in the proof of that paper, the function $\Psi_R$ defined by $\Psi_R (\xi) = \xi \sqrt{R(2M+1)} \left\{K+ \sqrt{\log \left( \frac{1}{1 \wedge \xi} \right)} \right\}$ provides an upper bound for the entropy integral.
	
\end{proof}

\begin{proof} [\textbf{Proposition~\ref{prop: non excha archi copula} }]
We first prove the result for the $L_\infty$ norm.
  The densities $g_r$ being exchangeable means that for every permutation
  $\sigma$ of the set $\{ 1, \ldots, M \}$, we have the identity $g_r (\bm u
) = g_r (\bm u_\sigma)$, where $\bm u_\sigma:= (u_{\sigma(1)}, \dots, u_{\sigma(M)}) $.
 Taking convex combinations retains that symmetry. If we
  define $g : = \sum_{r = 1}^R \pi_r g_r$, then $g$ is also exchangeable,
  \begin{align*}
    g (\bm u ) & =  \sum_{r = 1}^R \pi_r g_r (\bm u )\\
    & =  \sum_{r = 1} \pi_r g_r (\bm u_\sigma) =  g (\bm u_\sigma).
  \end{align*}
  There exists at least one $\bm u \in (0, 1)^M$ and
  one $\sigma$ such that $c (\bm u ) \neq c (\bm u_{\sigma}) $ because $c$ is non-exchangeable,
  and hence $g$ is
  incapable of separating some points that $c$ is capable of separating. For
  one of these points, define for $\eta > 0$
  \begin{align*}
    \eta & =  | c (\bm u ) - c (\bm u_{\sigma})|\\
    & =  | c (\bm u ) - g (\bm u ) + g (\bm u ) - g (\bm u_{\sigma}) + g (\bm u_{\sigma}) - c (\bm u_{\sigma}) |\\
    & \leqslant  | c (\bm u ) - g (\bm u ) | + | c    (\bm u_{\sigma}) - g (\bm u_\sigma ) |  + | g     (\bm u ) - g (\bm u_{\sigma}) |\\
    & =  | c (\bm u ) - g (\bm u ) | + | c (u_{\sigma})  - g (\bm u )   |\\
    & \leqslant  2 \max \{ | c (\bm u) - g (\bm u ) |,
    | c (\bm u_{\sigma}) - g (\bm u_{\sigma }) | \}\\
    & \leqslant  2 \sup_{\bm u \in (0, 1)^M} | c (\bm u) - g (\bm u) | .
  \end{align*}
Hence,
  \begin{eqnarray*}
    \sup_{\bm u \in (0, 1)^M} | c (\bm u) - g (\bm u) | & \geqslant &  \eta/2 > 0.
  \end{eqnarray*}
  Choosing $\varepsilon = \eta / 2$ is sufficient to prove the proposition.

  Now all that is necessary is to find such an $\eta$,  given that we are
  constructing a counter-example. For $M = 2$, consider the copula (which is
  constructed using arguments in {\cite{durante2009construction}})
  \[ C (u, v) = u^{1 - \alpha} v^{1 - \beta} [u^{- \theta \alpha} + v^{-
     \theta \beta} - 1]^{- \frac{1}{\theta}}, \]
  where $\alpha, \beta \in (0, 1)$ and $\theta > 0$. We need to impose the
  constraint that either $\alpha \neq \frac{1}{2}$ or $\beta \neq \frac{1}{2}$
  to get a non-exchangeable copula. In particular, picking a concrete example,
  let $\alpha = \frac{1}{4}$, $\beta = \frac{1}{2}$, $\theta = 20$, $u =
  \frac{1}{3}$ and $v = \frac{2}{3}$ yields an $\eta > 0.2$. Taking $\varepsilon
  = 0.1$ yields the counterexample.

  Since we are ultimately interested in approximating copula densities and not
  copulas themselves, it is possible to work with the copula density of the
  previous copula and show that we obtain $\eta > 0.3$ for the same parameter
  values.

The proof for the $L_1 $ norm follows from that of the $L_\infty$ norm because $c$ is continuous
and we only need to look at a compact subset.
\end{proof}

\begin{proof} [\textbf{ Proposition~\ref{Prop: radial symmetry} }]
Proof of (i). Without loss of generality, consider the stochastic representation
$
    \bm {X}:=  \bm {A} \bm {Y}, \quad \bm{U}  =  F (\bm{X}),
$
  where $\bm Y$ is spherically symmetric,  $\bm R = \bm A'\bm  A$ and $ F$ is the distribution of $\bm X$. Then $\bm {oY}
  \xequal{d} \bm Y$ for any orthogonal matrix $\bm o$ because $\bm Y$ is spherically
  symmetric. It implies that $- \bm X \xequal{d} \bm X$ and thus
$
   \bm{1}_M - \bm U = \bm{1}_M - F (\bm X)  =  F (- \bm X) .
$
  Finally, consider a finite mixture of elliptical copulas $ G =  \pi_1 G_{1, \bm{R}_1}+ \cdots +  \pi_R G_{1, \bm{R}_R}$.
  Each component is radially symmetric (around $\frac{1}{2} \bm{1}_M$) implying
  that $ G$ is again radially symmetric.
  \end{proof}

\begin{proof} [\textbf{Proposition \ref{prop: elliptic copulas and radial symmetry}  }]
We give the proof for the $L_\infty$ norm. The proof for the $L_1$ norm then follows from the continuity of $c$ over a compact subset.
  We already showed that $g = \sum_{r = 1}^R \pi_r g_r$ is radially
  symmetric. The copula density $c$ being non-radially symmetric means that
  there exists at least one $\bm u \in (0, 1)^M$, $c (\bm u) \neq c (\bm{1}_M - \bm u)$.
  As in the Archimedean copula
  case, the lack of approximation occurs because $g$ is incapable of
  separating some points that $c$ is capable of separating. For one of those
  points, define, for $\eta > 0$,
  \begin{align*}
    \eta & :=  | c (\bm u) - c (\bm{1}_M - \bm u) |\\
    & =  | c (\bm u) - g (\bm u) + g (\bm u) - g (\bm{1}_M - \bm u) + g (\bm{1}_M - \bm u) - c (\bm{1}_M - \bm u)
    |\\
    & \leqslant  | c (\bm u) - g (\bm u) | + |g (\bm u) - g (\bm{1}_M - \bm u) | + |g (\bm 1_M - u) -
    c (\bm 1_M - \bm u) |\\
    & =  | c (\bm  u) - g (\bm u) | + |g (\bm {1}_M - \bm u) - c (\bm{1}_M - \bm u) |\\
    & \leqslant  2 \max \{ | c (\bm u) - g (\bm u) | \nobracket, \nobracket |g (\bm{1}_M
    - \bm u) - c (\bm{1}_M - \bm u) | \}\\
    & \leqslant 2 \sup_{\bm u \in (0, 1)^M} | c (\bm u) - g (\bm u) | .
  \end{align*}
  We deduce that choosing $\varepsilon = \eta / 2$ is sufficient to prove the
  existence of the counter-example.
  Consider a Clayton copula in two dimensions with parameter $\theta = 1$
  \[ c (u, v):= \frac{2 uv}{(u + v - uv)^3}. \]
  Picking $u = \frac{1}{10}$ and $v = \frac{1}{2}$ allows us to find $\eta >
  0.4$.
\end{proof}

\section{Drawing from the bivariate copula \eqref{eq: example copula}    \label{app: drawing from example copula}}
First, we note that the conditional copula distribution
\begin{eqnarray*}
  C (u|v ) & = & \frac{\partial}{\partial v} C (u, v)\\
  & = & (1 - \beta) u^{1 - \alpha} v^{- \beta}  [u^{- \theta \alpha} + v^{-
  \theta \beta} - 1]^{- \frac{1}{\theta}}\\
  & + & \beta u^{1 - \alpha} v^{- \beta (1 + \theta)}  [u^{- \theta \alpha} +
  v^{- \theta \beta} - 1]^{- \frac{1}{\theta} - 1}
\end{eqnarray*}
can itself be written as a mixture with $\beta \in (0, 1)$ determining the
mixing probability. We can  sample from this model as follows
\begin{itemize}
  \item Draw $V$ from a uniform distribution. Set $X_1 = F_1^{- 1} (V)$.

  \item Draw another independent uniform $W$. Set $U = C^{- 1} (W|V)$. Set
  $X_2 = F_2^{- 1} (U)$.

  Thus, $(U, V)$ is a draw from the copula model and $(X_1, X_2)$ is a draw
  from $f$.
\end{itemize}

\bibliographystyle{apalike}
\bibliography{refs}
\end{document}